\documentclass[runningheads]{llncs}
\pagestyle{plain}

\usepackage{version}
\includeversion{LONG}
\excludeversion{SHORT} 




\usepackage{xspace}
\usepackage{url}

\usepackage{etex}
\usepackage{amsmath}
\usepackage[english]{babel}
\usepackage{amsfonts}
\usepackage{amssymb}
\usepackage{euscript}
\usepackage{enumerate}
\usepackage[dvips]{graphicx}
\usepackage{rotating}
\usepackage{proof}
\usepackage{bussproofs}
   \EnableBpAbbreviations
   
\usepackage[dvips,ps,all,rotate]{xy}
\usepackage{pdfsync}

\newcommand{\tr}[1]{#1^*}

\newcommand{\interp}{\mathcal{I}}

\newcommand{\ltl}{\mathit{LTL}}
\newcommand{\Gammaltl}{\Lambda}
\newcommand{\Gammaltlh}{\Gamma}
\newcommand{\ltlh}{\mathit{LTL_{\history}}}

\newcommand{\ctlstar}{\mathit{CTL^*}}

\newcommand{\axltl}{\mathcal{H}(\ltl)}

\newcommand{\nltlh}{\mathcal{N}(\ltlh)}

\newcommand{\m}{\mathcal{M}}

\newcommand{\n}{\mathcal{N}}

\newcommand{\s}{\mathcal{S}}

\newcommand{\lab}{\mathcal{L}}
\newcommand{\val}{\mathcal{V}}
\newcommand{\prop}{\mathcal{P}}

\newcommand{\bottomE}{\bottom\!\!E}

\newcommand{\limpliesI}{\limplies\!\!I}
\newcommand{\limpliesE}{\limplies\!\!E}
\newcommand{\wedgeI}{\wedge I}
\newcommand{\wedgeE}{\wedge E}
\newcommand{\veeI}{\vee I}
\newcommand{\veeE}{\vee E}

\newcommand{\historyI}{\history I}
\newcommand{\historyE}{\history E}

\newcommand{\lin}{\mathit{lin}_\nextb}

\newcommand{\ser}{\mathit{ser}_\nextb}

\newcommand{\refldesc}{\mathit{refl}_\descb}

\newcommand{\transdesc}{\mathit{trans}_\descb}
\newcommand{\basedesc}{\mathit{base}_\descb}
\newcommand{\eqdesc}{\mathit{eq}_\descb}
\newcommand{\splitdesc}{\mathit{split}_\descb}
\newcommand{\last}{\mathit{last}}

\newcommand{\ind}{\mathit{ind}}

\newcommand{\base}{\mathit{base}_\descb}
\newcommand{\gI}{\g I}
\newcommand{\gE}{\g E}
\newcommand{\xI}{\x I}
\newcommand{\xE}{\x E}
\newcommand{\fI}{\f I}
\newcommand{\fE}{\f E}
\newcommand{\bottom}{\perp}

\newcommand{\nextb}{\lhd}
\newcommand{\descb}{\leqslant}

\newcommand{\nmodels}{\nvDash}
\newcommand{\modelsltl}{\models_{_\ltl}}
\newcommand{\nmodelsltl}{\nmodels_{_\ltl}}
\newcommand{\modelsltlh}{\models_{_\history}}
\newcommand{\nmodelsltlh}{\nmodels_{_\history}}
\newcommand{\vdashltlh}{\vdash_{_\history}}
\newcommand{\vdashltl}{\vdash_{_\ltl}}

\newcommand{\limplies}{\supset}

\newcommand{\g}{\mathsf{G}}
\newcommand{\history}{\nabla}

\newcommand{\nat}{\mathbb{N}}

\newcommand{\f}{\mathsf{F}}

\newcommand{\x}{\mathsf{X}}
\newcommand{\until}{\mathsf{U}}

\newcommand{\ltllocal}{\ltl^l}
\newcommand{\ltlhistory}{\ltl^{l\history}}
\newcommand{\local}[1]{{#1}^l}
\newcommand{\hist}[1]{{#1}^{l\history}}

\newcommand{\secref}[1]{Section~\ref{#1}}

\input xy
\xyoption{all}



\begin{document}


\title{A History of Until}

\author{Andrea Masini \qquad Luca Vigan{\`o} \qquad Marco Volpe}

\institute{Department of Computer Science, University of Verona, Italy \\
    \{\url{andrea.masini} $\mid$ \url{luca.vigano} $\mid$ \url{marco.volpe}\}\url{@univr.it} }



%
 \maketitle

\begin{abstract}
Until is a notoriously difficult temporal operator as it is both
existential and universal at the same time: $A \until B$ holds at the
current time instant $w$ iff either $B$ holds at $w$ or there
\emph{exists} a time instant $w'$ in the future at which $B$ holds and such
that $A$ holds in \emph{all} the time instants between the current one
and $w'$.
This ``ambivalent'' nature poses a significant challenge when attempting
to give deduction rules for until. In this paper, in contrast,
we make explicit this duality of until 
by introducing a new temporal operator $\history$ that allows us to
formalize the ``history'' of until, i.e.,~the ``internal'' universal
quantification over the time instants between the current one and $w'$.
This approach provides the basis for formalizing deduction systems for
temporal logics endowed with the until operator. For concreteness, we
give here a labeled natural deduction system for a linear-time logic
endowed with the new history operator and show that, via a proper
translation, such a system is also sound and complete with respect to
the linear temporal logic $\ltl$ with until.
%
\end{abstract}
%

\section{Introduction}\label{sec:introduction}

Until is a notoriously difficult temporal operator. This is because of
its ``ambivalent'' nature of being an operator that is both existential
and universal at the same time: $A \until B$ holds at the current time
instant (sometimes ``world'' or ``state'' is used in place of ``time
instant'') $w$ iff either $B$ holds at $w$ or there \emph{exists} a time
instant $w'$ in the future at which $B$ holds and such that $A$ holds in
\emph{all} the time instants between the current one and $w'$. The words
in emphasis highlight the dual existential and universal nature of $\until$, which
poses a significant challenge when attempting to give deduction rules
for until, so that deduction systems for temporal logics  either
deliberately exclude until from the set of operators considered or
devise clever ways to formalize reasoning about until. And even if one
manages to give rules, these often come at the price of additional
difficulties for, or even the impossibility of, proving useful
metatheoretic properties, such as normalization or the subformula
property. (This is even more so in the case of
Hilbert-style axiomatizations, which provide axioms for until, but are
not easily usable for proof construction.) See, for instance,
\cite{BCRV09-JLC,Bol+:AutomatedLTL:07,FisherGabbayVila05,gore:99,Gough84,Schwendimann98},
where techniques for formalizing suitable inference rules include
introducing additional information (such as the use of a Skolem function
$f(A \until B)$ to name the time instant where $B$ begins to hold), or
exploiting the standard recursive unfolding of until
\begin{equation}\label{eq:unfolding} 
	A \until B \equiv B \vee (A \wedge \x(A \until B))
\end{equation}
which says that $A \until B$ iff either $B$ holds or $A$ holds and in the successor
time instant (as expressed by the \emph{next} operator $\x$) we have
again $A \until B$.

In this paper, in contrast, we make explicit the duality of until
by introducing a new temporal operator $\history$ that allows us to
formalize the ``history'' of until, i.e.,~the fact that when we have $A
\until B$ the formula $A$ holds in all the time instants between the
current one and the one where $B$ holds. We express this ``historic''
universal quantification by means of 
$\history$ with respect to the following intuitive translation:
\begin{equation}\label{eq:translation}
A \until B \equiv B \vee \f(\x B \wedge \history A)
\end{equation}
That is: $A \until B$ iff either $B$ holds or there exists a time
instant $w'$ in the future (as expressed by the \emph{sometime in the
future} operator $\f$) such that
\begin{itemize}
	\item $B$ holds in the successor time instant, and 
	\item $A$ holds in all the time instants between the current one and $w'$ (included).
\end{itemize}
The latter conjunct is precisely what the \emph{history} operator
$\history$ expresses\footnote{This is in contrast to the
unfolding \eqref{eq:unfolding}. The decoupling of $\until$ that we
achieve with $\history$ is precisely what allows us to give well-behaved
(in a sense made clearer below) natural deduction rules.}. This is better
seen when introducing labeling: since $\history$ actually quantifies
over the time instants in an interval (delimited by the current 
instant and the one where the $B$ of the until holds), 
we adopt a labeling discipline that is slightly different from
the more customary one of labeled deduction.

The framework of \emph{labeled deduction} has been successfully employed
for several non-classical, and in particular modal and temporal, logics,
e.g.,~\cite{Gab:LabDedSys:96,Sim:PhdThesis:94,Vig:Labelled:00}, since
labeling provides a clean and effective way of dealing with modalities
and gives rise to deduction systems with good proof-theoretical
properties. The basic idea is that labels allow one to explicitly encode
additional information, of a semantic or proof-theoretical nature, that
is otherwise implicit in the logic one wants to capture. So, for
instance, instead of a formula $A$, one can consider the \emph{labeled
formula} $b:A$, which intuitively means that $A$ holds at the time
instant denoted by $b$ within the underlying Kripke semantics. One can
also use labels to specify how time instants are related, e.g.,~the
\emph{relational formula} $b R c$ states that the time instant $c$ is
accessible from $b$.

Considering labels that consist of a single time instant is not enough
for $\history$, as the operator is explicitly designed to speak about a
sequence of time instants (namely, the ones constituting the history of
the corresponding until, if indeed $\history$ results from the
translation of an $\until$). We thus consider labels that are built out
of a sequence of time instants, so that we can write $\alpha b_1 b_3:
\history A$ to express, intuitively, that $A$ holds in the interval
between time instants $b_1$ and $b_3$, which together with the
sub-sequence $\alpha$ constitute a sequence of time instants $\alpha b_1
b_3$. This allows us to give the natural deduction elimination
rule 
\begin{displaymath}\small
\infer[\historyE]{\alpha b_1 b_2: A}{\alpha b_1 b_3: \history A & b_1 \descb b_2 & b_2 \descb b_3}
\end{displaymath}
that says that if $\history A$ holds at time instant $b_3$ at the end
of the sequence $\alpha b_1 b_3$ and if 
$b_2$ is in-between $b_1$ and $b_3$, as expressed by the relational
formulas with the accessibility relation $\descb$, then we can conclude
that $A$ holds at $b_2$.

Dually, we can introduce $\history A$ at time instant $b_3$ at the end
of the sequence $\alpha b_1 b_3$ whenever from the assumptions $b_1
\descb b_2$ and $b_2\descb b_3$ for a \emph{fresh} $b_2$ we can infer
$\alpha b_1 b_2:A$, i.e.\footnote{The side condition that $b_2$ is
fresh means that $b_2$ is different from $b_1$ and $b_3$, and
does not occur in any assumption on which $\alpha b_1 b_2:A$ depends
other than the discarded assumptions $b_1 \descb b_2$ and $b_2 \descb
b_3$.},:
\begin{displaymath}\small
	\infer[\historyI]{\alpha b_1 b_3: \history A}{\infer*{\alpha b_1 b_2:A}{[b_1 \descb b_2]\, [b_2\descb b_3]}}
\end{displaymath}

The adoption of time instant sequences for labels has thus allowed us to
give rules for $\history$ that are well-behaved in the spirit of natural
deduction~\cite{Pra:NatDed:65}: there is precisely one introduction and
one elimination rule for $\history$, as well as for the other
connectives and temporal operators ($\limplies$, $\g$, and $\x$). This
paves the way to a proof-theoretical analysis of the resulting natural
deduction systems, e.g.,~to show proof normalization and other useful
meta-theoretical analysis, which we are tackling in current work.

Moreover, the rules $\historyI$ and $\historyE$ provide a clean-cut way
of reasoning about until, according to the translation
\eqref{eq:translation}, provided that we also give rules for $\f$ and
$\x$. These operators have a local nature, in the sense that they speak
not about sequences of time instants but about single time instants.
Still, we can easily give natural deduction rules for them by
generalizing the more standard ``single-time instant'' rules
(e.g.,~\cite{BCRV09-JLC,Bol+:AutomatedLTL:07,gore:99,MasiniViganoVolpe09a,Sim:PhdThesis:94,Vig:Labelled:00,ViganoVolpe08})
using our labeling with sequences of time instants. As we will discuss
in more detail below, if we collapse the sequences of time instants to
consider only the final time instant in the sequence (or, equivalently,
if we simply ignore all the instants in a sequence but the last), then
these rules reduce to the standard ones. For instance, for the
\emph{always in the future} operator $\g$ (the 
dual of $\f$)
and $\x$, with the corresponding \emph{successor relation} $\nextb$, we
can give the elimination rules
\begin{displaymath}\small
	\infer[\gE]{\alpha b_1 b_2: A}{\alpha b_1: \g A & b_1 \descb b_2}
	\qquad \text{and} \qquad 
	\infer[\xE]{\alpha b_1b_2: A}{\alpha b_1: \x A & b_1 \nextb b_2}
\end{displaymath}
The rule $\gE$ says that if $\g A$ holds at time instant $b_1$, which is
the last in the sequence $\alpha b_1$ and $b_2$ is $\descb$-accessible
from $b_1$ (i.e.,~$b_1 \descb b_2$), then we can conclude that $A$ holds
for the sequence $\alpha b_1 b_2$. The rule $\xE$ is justified similarly
(via $\nextb$).
The corresponding introduction rules are given in \secref{sec:system},
together with rules for $\bottom$ and the connective $\limplies$, as
well as a rule for induction on the underlying linear ordering. As we
will see, we also need rules expressing the properties of the relations
$\descb$ and $\nextb$. Moreover, the fact that we consider sequences of
time instants as labels requires us to consider some structural rules to
express properties of such sequences (with respect to formulas). 

This approach thus provides the basis for formalizing deduction systems
for temporal logics endowed with the until operator. For concreteness,
we give here a labeled natural deduction system for a linear-time logic
endowed with the new history operator $\history$ and show that, via a
proper translation, such a system is also sound and complete with
respect to the linear temporal logic $\ltl$ with until. (We do not
consider past explicitly here, but adding operators and rules for it
should be unproblematic, e.g.,~as in~\cite{ViganoVolpe08}.)


We proceed as follows. In \secref{sec:ltl}, we
briefly recall the syntax and semantics of $\ltl$, and an axiomatization
for it. In \secref{sec:ltlh}, we define $\ltlh$, the logic that is
obtained from $\ltl$ by replacing $\until$ with the history $\history$,
and give a validity-preserving translation, based on
\eqref{eq:translation}, from $\ltl$ into $\ltlh$. In
\secref{sec:system}, we give a labeled natural deduction system $\nltlh$ 
that it is sound with respect to the semantics of $\ltlh$. 
By focusing only on those derivations
whose conclusion and open assumptions correspond to the
translation of $\ltl$-formulas, we show that $\nltlh$ can be used to capture reasoning in $\ltl$ and that it is in fact sound and complete
with respect to the semantics of $\ltl$. In \secref{sec:discussion}, we draw conclusions and illustrate directions of current and
future work.
\begin{SHORT}
Due to lack of space, some of the proofs are only sketched here. Full proofs are given in \cite{}.
\end{SHORT}
\begin{LONG}
Full proofs are given in the appendix.
\end{LONG}

\section{The Linear Temporal Logic $\ltl$}\label{sec:ltl}

We  recall the syntax and
semantics of $\ltl$ and an axiomatization for it.


\begin{definition}\em
Given a set $\cal{P}$ of propositional symbols, the set of
\emph{(well-formed) $\ltl$-formulas} 
is defined by the grammar
\begin{displaymath}
  A ::= \, p \mid \bottom \mid A \limplies A \mid \g A \mid \x A \mid A \until A 
\end{displaymath}
where $p \in \cal{P}$.
The set of $\ltl$-\emph{atomic formulas} is $\cal{P}\cup\{\bot\}$. The
\emph{complexity} of an $\ltl$-formula is the number of occurrences of
the connective $\limplies$ and of the temporal operators $\g$, $\x$, and
$\until$.
\end{definition}

The intuitive meaning of $\g$, $\x$, and $\until$ is the standard one:
$\g A$ states that $A$ holds always in the future, 
$\x A$ states that $A$ holds in the next time instant, and
$A \until B$ states that $B$ holds at the current time instant or there is a time instant $w$ in the future such that $B$ holds in $w$ and $A$ holds in all the time instants between the current one and $w$.
As usual, we can introduce abbreviations and use, e.g., $\neg$, $\vee$
and $\wedge$ for negation, disjunction, and conjunction, respectively:
$\neg A \equiv A\limplies \bottom$, $A \vee B \equiv \neg A \limplies
B$, and $A \wedge B \equiv \neg (\neg A \vee \neg B)$. We can also
define other temporal operators, e.g.,~$\f A \equiv \neg \g \neg A$ to
express that $A$ holds sometime in the future. We write $\Gammaltl$
to denote a \emph{set of $\ltl$-formulas}.



\begin{definition}\label{def:ltl-truth}\em
Let $\n=\langle \nat, s:\nat \to\nat, \leq \rangle$ be the
standard structure of natural numbers, where $s$ and $\leq$ are
respectively the successor function and the total (reflexive) order
relation. An \emph{$\ltl$-model} is a pair $\m=\langle \n, \val
\rangle$ where $\val:\nat \to 2^{\prop}$.
\emph{Truth} for an $\ltl$-formula at a point $n \in \nat$ in an
$\ltl$-model $\m = \langle \n, \val\rangle$ is the
smallest relation $\modelsltl$ satisfying:
\begin{SHORT}
\vspace*{-0.45cm}
\end{SHORT}
	\begin{eqnarray*}
		\m, n \modelsltl p 
		& \quad \text{iff} \quad & p \in \val(n) \\
		\m, n \modelsltl A \limplies B 
		& \quad \text{iff} \quad & \m, n \modelsltl A 
		\text{ implies } \m, n \modelsltl B \\
		\m, n \modelsltl \g A 
		& \quad \text{iff} \quad & \m, m \modelsltl A 
		\text{ for all } m \ge n\\
		\m, n \modelsltl \x A 
		& \quad \text{iff} \quad & \m, n+1 \modelsltl A \\
		\m, n \modelsltl A \until B 
		& \quad \text{iff} \quad & \text{there exists } n' \ge n \text{ such that } \m,n'\modelsltl B\\ && \text{and } \m,m \modelsltl A \text{ for all } n \le m < n'
	\end{eqnarray*}
Note that $\m,n \nmodelsltl\ \bottom$ for every $\m$ and
$n$. 
By extension, we write:
\begin{SHORT}
\vspace*{-0.45cm}
\end{SHORT}
\begin{eqnarray*}
  \m \modelsltl A & \quad \text{iff} \quad & \m,n \modelsltl A \mbox{ for every natural number } n\\
  \m \modelsltl \Gammaltl & \quad \text{iff} \quad & \m \modelsltl A \mbox{ for all } A \in \Gammaltl\\
  \Gammaltl \modelsltl A & \quad \text{iff} \quad & \m \modelsltl \Gammaltl \mbox{ implies } \m \modelsltl A\mbox{, for every } \ltl\mbox{-model } \m
	\end{eqnarray*}
\end{definition}



We now present a sound and complete Hilbert-style axiomatization, which we call $\axltl$, for $\ltl$ (see, e.g., \cite{Gol:LogicsTimeComp:87}). 
$\axltl$ consists of the axioms
\begin{SHORT}
\vspace*{-0.15cm}
\end{SHORT}
	\begin{displaymath}\small
		\renewcommand{\arraystretch}{1.2}
		\begin{array}{lclllcl}
		\mathit{(A1)} && \textrm{Any tautology instance}
		&&
		\mathit{(A2)} && \g(A \limplies B) \limplies (\g A \limplies \g B)
		 \\
		\mathit{(A3)} && (\x \neg A \leftrightarrow \neg \x A)
		&&
		\mathit{(A4)} && \x(A \limplies B) \limplies (\x A \limplies \x B)
		 \\
		\mathit{(A5)} && \g A \, \limplies \, A  \wedge \x \g A
		&&
		\mathit{(A6)} && \g (A \limplies \x A) \limplies (A \limplies \g A) \\
		\mathit{(A7)} && A \until B \, \leftrightarrow \, (B \vee (A \wedge \x (A \until B)))
		&&
		\mathit{(A8)} && A \until B \, \limplies \, \f B
		\end{array}
	\end{displaymath}	
where we denote with $\leftrightarrow$ the double implication, and of
the rules of inference
\begin{SHORT}
\vspace*{-0.15cm}
\end{SHORT}
\begin{displaymath}\small
	\renewcommand{\arraystretch}{1.2}
		\begin{array}{l}
		\mathit{(MP)} \ \textrm{If } A \textrm{ and } A \limplies B \textrm{ then } B 
		\qquad
		\mathit{(Nec_X)} \ \textrm{If } A \textrm{ then } \x A 
		\qquad
		\mathit{(Nec_G)} \ \textrm{If } A \textrm{ then } \g A 
		\end{array}
\end{displaymath}

The set of theorems of $\axltl$ is the smallest set containing these
axioms and closed with respect to these rules of inference.

\section{$\ltlh$: $\ltl$ with history}\label{sec:ltlh}

In this section, we give the linear temporal logic $\ltlh$, which is
obtained from $\ltl$ by replacing the operator $\until$ with a new unary
temporal operator $\history$, called \emph{history}. The definition of
the semantics of $\ltlh$ requires a notion of truth given with respect
to sequences of time instants rather than just to time instants. We will
then provide a translation from the language of $\ltl$ into the language
of $\ltlh$ and show some properties of such a translation.

\subsection{Syntax and semantics}\label{sub:ltlh-semantics}

\begin{definition}\em
Given a set $\cal{P}$ of propositional symbols, the set of
\emph{(well-formed) $\ltlh$-formulas} 
is defined by the grammar
\begin{displaymath}
  A ::= \, p \mid \bottom \mid A \limplies A \mid \g A \mid \x A \mid \history A
\end{displaymath}
where $p \in \cal{P}$.
The set of $\ltlh$-\emph{atomic formulas} is $\cal{P}\cup\{\bot\}$. The
\emph{complexity} of an $\ltlh$-formula is the number of occurrences of
the connective $\limplies$ and of the temporal operators $\x$, $\g$, and
$\history$.
\end{definition}


The intuitive meaning of the operators $\g$ and $\x$ is the same as for
$\ltl$, while $\history A$ intuitively states that $A$ holds at any instant of a
 particular time interval (but here we see that we need sequences of time
instants to formalize the semantics of the history operator, as we
anticipated in the introduction). Again, we can define other connectives
and operators as abbreviations, e.g.,~$\neg$, $\vee$, $\wedge$, $\f$ and
so on. We write $\Gammaltlh$ to denote a \emph{set of $\ltlh$-formulas}.



To define a labeled deduction system for the logic $\ltlh$, we extend
the language with a set of labels and finite sequences of labels, and
introduce the notions of labeled formula and relational formula.

\begin{definition}\em

Let $\lab$ be a set of labels. A finite non-empty sequence of labels
(namely, an element of $\lab^+$) is called a \emph{sequence}. 
If $A$ is an $\ltlh$-formula and $\alpha$ is a sequence, then
$\alpha:A$ is a \emph{labeled (well-formed) formula} (\emph{lwff} for
short). 
The set of \emph{relational (well-formed) formulas} (\emph{rwffs} for
short) is the set of expressions of the form $b \descb c$ or $b \nextb
c$, where $b$ and $c$ are labels.
\end{definition}

In the rest of the paper, we will assume given a fixed denumerable set
$\lab$ of labels and we will use 
$b, c, d, \ldots$ 
to denote labels, 
$\alpha,\beta, \gamma$ to denote finite 
sequences of labels\footnote{With a slight abuse of notation, we will also use $\alpha,\beta, \gamma$ to denote possibly empty subsequences and thus write $\alpha b_1 \ldots b_k$ (for $k \ge 1$) to denote a sequence where $\alpha$ may be empty.}
(e.g.,~$bcd\ldots$ or just $b$ in the case
of a sequence consisting of only one time instant),
$\varphi$ to denote a \emph{generic formula} (either labeled or relational) and $\Phi$ to denote a \emph{set of generic formulas}.

%
%


\begin{definition}\label{def:ltlh-truth}\em

An \emph{observation sequence} is a non-empty sequence $ \sigma= [n_0,$
$\ldots,$ $n_k]$ of natural numbers.
\emph{Truth} for an $\ltlh$-formula at an observation sequence $\sigma$ in an
$\ltl$-model $\m = \langle \n, \val\rangle$ is the
smallest relation $\modelsltlh$ satisfying:
\begin{SHORT}
\vspace*{-0.5cm}
\end{SHORT}
		\begin{eqnarray*}
		\m, [n_0,\ldots,n_k] \modelsltlh p 
		& \quad \text{iff} \quad & p \in \val(n_k) \\
		\m,[n_0,\ldots,n_k] \modelsltlh A \limplies B
		& \quad \text{iff} \quad & \m,[n_0,\ldots,n_k] \modelsltlh A 
		\text{ implies }\\
 & & \m,[n_0,\ldots,n_k] \modelsltlh B \\
		\m,[n_0,\ldots,n_k]\modelsltlh \g A 
		& \quad \text{iff} \quad & \m,[n_0,\ldots,n_k,m] \modelsltlh A 
		\text{ for all } m \ge n_k\\
		\m,[n_0,\ldots,n_k]\modelsltlh \x A 
		& \quad \text{iff} \quad & \m,[n_0,\ldots,n_k,n_k+1]\modelsltlh A \\
		\m,[n_0,\ldots,n_{k-1},n_k] \modelsltlh \history A 
		& \quad \text{iff} \quad & \m,[n_0,\ldots,n_{k-1},m]\modelsltlh A\\
& \quad & \text{for all } n_{k-1} \le m \le n_k \text{ (if } 0 < k\text{)}
\end{eqnarray*}
\begin{eqnarray*}
		\m,[n_0]\modelsltlh \history A 
		& \quad \text{iff} \quad & \m,[n_0]\modelsltlh A		
		\end{eqnarray*}
By extension, we write:
\begin{SHORT}
\vspace*{-0.35cm}
\end{SHORT}
	\begin{eqnarray*}
  \m \modelsltlh A & \quad \mbox{iff} \quad & \m,\sigma \modelsltlh A \mbox{ for every observation sequence } \sigma\\
  \m \modelsltlh \Gammaltlh & \quad \mbox{iff} \quad & \m \modelsltlh A \mbox{ for all } A \in \Gammaltlh\\
  \Gammaltlh \modelsltlh A & \quad \mbox{iff} \quad & \m \modelsltlh \Gammaltlh \mbox{ implies } \m \modelsltlh A\mbox{, for every } \ltl\mbox{-model } \m
\end{eqnarray*}
Given an $\ltl$-model $\m$, a \emph{structure} is a pair $\s=\langle \m,
\interp \rangle$ where $\interp:\lab\to\nat$. Let $\Sigma$ be the set of observation sequences and $\interp^+:\lab^+ \to \Sigma$ the extension of $\interp$ to sequences,
i.e.,~$\interp^+(b_0\ldots b_n)$ $=[\interp(b_0),\ldots, \interp(b_n)]$.
\emph{Truth} for a generic formula $\varphi$ in a structure
$\s=\langle\m, \interp\rangle$ is the smallest
relation $\modelsltlh$ satisfying:
\begin{SHORT}
\vspace*{-0.35cm}
\end{SHORT}
		\begin{eqnarray*}
		\m,\interp \modelsltlh a \descb b
		& \quad \text{iff} \quad & \interp(a) \le \interp(b) \\
	    \m,\interp \modelsltlh a \nextb b
		& \quad \text{iff} \quad & \interp(b) = \interp(a) + 1 \\
	    \m,\interp \modelsltlh \alpha:A
		& \quad \text{iff} \quad & \m, \interp^+(\alpha)\modelsltlh A
		\end{eqnarray*}
Note that $\m,\sigma \nmodelsltlh \bot$ and $\m,\interp\ \nmodelsltlh \alpha:\bot$ for every $\m$, $\sigma$ and
$\interp$.

Given a set $\Phi$ of generic formulas and a generic formula $\varphi$:
\begin{SHORT}
\vspace*{-0.35cm}
\end{SHORT}
		\begin{eqnarray*}
		\m,\interp \modelsltlh \Phi 
		& \quad \text{iff} \quad & \m,\interp \modelsltlh \varphi 
		\text{ for all } \varphi\in\Phi \\
		\Phi \modelsltlh \varphi 
		& \quad \text{iff} \quad & \m, \interp \modelsltlh \Phi 	
		\text{ implies } \m, \interp \modelsltlh \varphi 
		\text{ for all } \m \text{ and } \interp
		\end{eqnarray*}
\end{definition}

\subsection{A translation from $\ltl$ into $\ltlh$}\label{sub:translation}

$\ltl$ and $\ltlh$ are, obviously, related logics. 
In fact, below we will define a validity-preserving translation
$\tr{(\cdot)}$ from $\ltl$ into $\ltlh$. Then, in Lemma~\ref{lem:last},
we will show that if an $\ltlh$-formula corresponds to the translation
of some $\ltl$-formula, then it can be interpreted ``locally'', i.e.,~its
truth value with respect to an observation sequence depends only on the
last element of the sequence. Finally, in Lemma~\ref{lem:translation} and
Theorem~\ref{th:translation}, we will use this result to prove that the
translation preserves the validity of formulas. 
This property allows us to 
use the deduction system for $\ltlh$, which will be presented in
Section~\ref{sec:system}, for reasoning on $\ltl$ too, as we will show
in Section~\ref{sub:sandc}, when discussing soundness and completeness of
the system.

\begin{definition}\em
	We define the translation $\tr{(\cdot)}$ from the language of $\ltl$ into the language of $\ltlh$ inductively as follows: \\
	   \begin{center}
      \begin{tabular}{lcllll}
        $\tr{(p)}$ & $=$ & $p \, ,$ for $p$ atomic
		\\ 
        $\tr{(\g A)}$ & $=$ & $\g \, \tr{(A)} \,$\\
        $\tr{(\bottom)}$ & $=$ & $\bottom \,$
		\\
        $\tr{(\x A)}$ & $=$ & $\x \, \tr{(A)} \,$\\
        $\tr{(A \limplies B)}$ & $=$ & $\tr{(A)} \limplies \tr{(B)} \,$
		\\
        $\tr{(A \until B)}$ & $=$ & $\tr{(B)} \vee (\, \f \, (\, \x \, \tr{(B)} \wedge \history \tr{(A)}\, )) \,$
      \end{tabular}
\end{center}

We extend $\tr{(\cdot)}$ to sets of formulas in the obvious way:
$\tr{\Gammaltl} = \{ \tr{B} \, \mid \, B \in \Gammaltl \}$.
\end{definition}


\begin{lemma}
	\label{lem:last}
	Let $\m$ be an $\ltl$-model, $[n_1, \ldots, n_k]$ an observation sequence, and $A$ an $\ltl$-formula. Then $\m, [n_1, \ldots, n_k] \modelsltlh \tr{A} \;$ iff $\m, [m_1, \ldots, m_r, n_k]$ $\modelsltlh \tr{A}$ for every sequence $m_1, \ldots, m_r$.  
\end{lemma}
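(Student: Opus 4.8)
The plan is to prove, by induction on the structure of the $\ltl$-formula $A$ (equivalently, on its complexity), the following symmetric reformulation of the statement: for all observation sequences $\sigma$ and $\sigma'$ having the same last element, $\m,\sigma \modelsltlh \tr{A}$ iff $\m,\sigma' \modelsltlh \tr{A}$. This is exactly the claim, taking $\sigma = [n_1,\ldots,n_k]$ and $\sigma' = [m_1,\ldots,m_r,n_k]$. Phrased this way, the induction hypothesis reads: the translation of every proper subformula of $A$ is \emph{local}, i.e.,~its truth at an observation sequence depends only on the last element of that sequence.

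The base cases will be immediate: $\tr{p} = p$ is true exactly when $p \in \val(n_k)$, which depends only on $n_k$, and $\tr{\bottom} = \bottom$ is never true. For $A = B \limplies C$, $A = \g B$ and $A = \x B$ the result will follow directly from the corresponding clauses of $\modelsltlh$ together with the induction hypothesis: the truth of $\tr{\g B}$ (resp. $\tr{\x B}$) at $\sigma$ is determined by the truth of $\tr{B}$ at the sequences obtained by appending to $\sigma$ a new last element $m \ge n_k$ (resp. $n_k+1$), and by the hypothesis these values depend only on the appended element, hence only on $n_k$; for $\limplies$ one simply combines the two local truth values. In short, the operators $\limplies$, $\g$ and $\x$ preserve locality, and consequently so do the defined abbreviations $\neg$, $\vee$, $\wedge$ and $\f$ built from them.

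The only genuinely new case will be $A = B \until C$, whose translation is $\tr{(B \until C)} = \tr{C} \vee \f(\x\tr{C} \wedge \history\tr{B})$. Here I will need to be careful, because $\history\tr{B}$ is \emph{not} itself local: its semantic clause refers to the second-to-last element of the sequence. The key is that inside this particular formula $\history\tr{B}$ is only ever evaluated at sequences of the form $[n_1,\ldots,n_k,m]$ produced by the outer $\f$. Unfolding the abbreviations against the definition of $\modelsltlh$, $\m,\sigma \modelsltlh \tr{(B\until C)}$ holds iff either $\m,\sigma \modelsltlh \tr{C}$ or there exists $m \ge n_k$ with $\m,[n_1,\ldots,n_k,m] \modelsltlh \x\tr{C} \wedge \history\tr{B}$. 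Evaluating the conjuncts at $[n_1,\ldots,n_k,m]$, the clause for $\x$ reduces the first to the truth of $\tr{C}$ at $[n_1,\ldots,n_k,m,m+1]$, and the clause for $\history$ (whose window is now delimited by the second-to-last element $n_k$ and the last element $m$) reduces the second to ``$\m,[n_1,\ldots,n_k,j] \modelsltlh \tr{B}$ for all $n_k \le j \le m$''. Applying the induction hypothesis to $\tr{C}$ and $\tr{B}$, each of these point evaluations depends only on its own last element ($m+1$, resp. $j$), so the entire existential statement becomes a condition mentioning only $n_k$ together with the quantified $m$ and $j$. Hence $\m,\sigma \modelsltlh \tr{(B\until C)}$ depends only on $n_k$, closing the case.

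I expect the main obstacle to be precisely this last case: a naive compositional argument claiming that every operator preserves locality breaks down at $\history$, so the induction cannot be carried out subformula-by-subformula straight through the translation. The resolution will be to treat the whole subterm $\f(\x\tr{C} \wedge \history\tr{B})$ as a unit and to observe that the $\f$ supplies the fresh last element $m$ that bounds the history window $[n_k,m]$; the induction hypothesis on $\tr{B}$ and $\tr{C}$ then ``localizes'' every point inside that window, so that the only non-local reference of $\history$, namely to $n_k$, is exactly the last element of $\sigma$ that we are entitled to depend on. I will also note that the degenerate singleton clause for $\history$ never intervenes here, since inside the until translation $\history\tr{B}$ is always applied to a sequence of length at least two.
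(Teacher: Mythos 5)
Your proposal is correct and follows essentially the same route as the paper's proof: induction on the complexity of $A$, with the connective cases handled compositionally and the $\until$ case handled by directly unfolding the semantics of $\tr{C} \vee \f(\x\tr{C} \wedge \history\tr{B})$ so that the induction hypothesis on $\tr{B}$ and $\tr{C}$ localizes every point evaluation inside the history window $[n_k,m]$. Your observation that $\history\tr{B}$ is not itself local but is only ever evaluated at sequences whose second-to-last element is $n_k$ is precisely the point on which the paper's argument for the $\until$ case turns.
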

\begin{corollary}
	\label{cor:last}
	Let $\m$ be an $\ltl$-model, $[n_1, \ldots, n_k]$ an observation sequence, and $A$ an $\ltl$-formula. Then $\m,[n_1, \ldots, n_k] \modelsltlh \tr{A}$ iff $\m,[n_k] \modelsltlh \tr{A}$.
\end{corollary}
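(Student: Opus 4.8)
The plan is to derive Corollary~\ref{cor:last} as a direct consequence of Lemma~\ref{lem:last}, which has just been stated. Lemma~\ref{lem:last} tells us that the truth of a translated formula $\tr{A}$ at an observation sequence $[n_1,\ldots,n_k]$ does not depend on the initial segment $n_1,\ldots,n_{k-1}$: it is unchanged if we replace that prefix by any other finite sequence $m_1,\ldots,m_r$, provided the last element $n_k$ is kept fixed. The corollary simply specializes this invariance to the degenerate choice in which the prefix is emptied out, leaving only the single-element sequence $[n_k]$.

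Concretely, I would argue as follows. Fix an $\ltl$-model $\m$, an observation sequence $[n_1,\ldots,n_k]$, and an $\ltl$-formula $A$. By Lemma~\ref{lem:last}, we have $\m,[n_1,\ldots,n_k]\modelsltlh\tr{A}$ iff $\m,[m_1,\ldots,m_r,n_k]\modelsltlh\tr{A}$ for \emph{every} sequence $m_1,\ldots,m_r$. Since the biconditional holds for every such prefix, I would instantiate it with the empty prefix (i.e.\ $r=0$), which yields the one-element observation sequence $[n_k]$. This gives $\m,[n_1,\ldots,n_k]\modelsltlh\tr{A}$ iff $\m,[n_k]\modelsltlh\tr{A}$, which is exactly the statement of the corollary.

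The only subtlety worth flagging is the edge case where the prefix is empty. One must check that $[n_k]$ is a legitimate observation sequence and that the $r=0$ instantiation is covered by the universal quantifier ``for every sequence $m_1,\ldots,m_r$'' in Lemma~\ref{lem:last}. By Definition~\ref{def:ltlh-truth} an observation sequence is merely a \emph{non-empty} sequence of natural numbers, so $[n_k]$ qualifies; and the footnote convention in the paper already allows $\alpha$ (hence a prefix such as $m_1,\ldots,m_r$) to be empty, so the $r=0$ case is genuinely included in the lemma's quantification. Thus no separate base case or additional reasoning is needed. I do not expect any real obstacle here: once Lemma~\ref{lem:last} is in hand, the corollary is an immediate instantiation, and the entire content of the argument has been pushed into the proof of the lemma (which is where the inductive work on the structure of $\tr{A}$ genuinely lives).
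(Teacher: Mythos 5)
Your proposal is correct and matches the paper exactly: the paper's own proof of Corollary~\ref{cor:last} is the one-liner ``Immediate, by Lemma~\ref{lem:last}'', i.e.\ precisely the instantiation with the empty prefix that you spell out. Your extra check that the $r=0$ case is covered by the lemma's quantification is a reasonable bit of due diligence that the paper leaves implicit.
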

\begin{lemma}
	\label{lem:translation}
	Let $\m$ be an $\ltl$-model, $n$ a natural number, and $A$ an $\ltl$-formula. Then $\m, n \modelsltl A$ iff $\m, [n] \modelsltlh \tr{A}$.
\end{lemma}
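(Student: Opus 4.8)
The plan is to prove Lemma~\ref{lem:translation} by structural induction on the $\ltl$-formula $A$, establishing for every natural number $n$ that $\m, n \modelsltl A$ iff $\m, [n] \modelsltlh \tr{A}$. The base cases ($p$ atomic and $\bottom$) are immediate from the definitions: $\m, n \modelsltl p$ iff $p \in \val(n)$ iff $\m,[n]\modelsltlh p$, and $\bottom$ is never satisfied on either side. For the inductive step I would treat each formula constructor in turn, appealing to the induction hypothesis on the immediate subformulas.

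For the propositional and ``local'' temporal operators the argument is routine. For $A \limplies B$, the $\ltlh$-truth clause for $\limplies$ at the singleton sequence $[n]$ mirrors the $\ltl$-clause, and the induction hypothesis converts each side. For $\x A$, note $\tr{(\x A)} = \x\,\tr{A}$, and $\m,[n]\modelsltlh \x\,\tr{A}$ iff $\m,[n,n+1]\modelsltlh \tr{A}$; here I would first apply Corollary~\ref{cor:last} to collapse $[n,n+1]$ to $[n+1]$, then the induction hypothesis, matching $\m,n+1\modelsltl A$. The case $\g A$ is analogous: $\m,[n]\modelsltlh \g\,\tr{A}$ iff $\m,[n,m]\modelsltlh \tr{A}$ for all $m \ge n$, and Corollary~\ref{cor:last} reduces each $[n,m]$ to $[m]$, so the induction hypothesis yields $\m,m\modelsltl A$ for all $m\ge n$, as required.

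The only genuinely substantial case is $A \until B$, whose translation is $\tr{(A\until B)} = \tr{B} \vee (\f(\x\,\tr{B} \wedge \history\,\tr{A}))$. Unfolding the $\ltlh$-semantics at $[n]$, the left disjunct gives $\m,[n]\modelsltlh \tr{B}$, i.e.\ (by induction hypothesis) $\m,n\modelsltl B$. For the right disjunct, $\m,[n]\modelsltlh \f(\cdots)$ means there is some $n' \ge n$ with $\m,[n,n']\modelsltlh \x\,\tr{B}\wedge\history\,\tr{A}$; the $\x\,\tr{B}$ conjunct forces $\m,[n,n',n'+1]\modelsltlh \tr{B}$, which by Corollary~\ref{cor:last} and the induction hypothesis says $\m,n'+1\modelsltl B$, while the $\history\,\tr{A}$ conjunct at $[n,n']$ says $\m,[n,m]\modelsltlh \tr{A}$ for all $n \le m \le n'$, i.e.\ $\m,m\modelsltl A$ for all $n \le m \le n'$. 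Writing $n'' = n'+1$, this exactly witnesses the $\ltl$-clause for $A\until B$, namely $\m,n''\modelsltl B$ with $\m,m\modelsltl A$ for all $n\le m < n''$. Conversely, given an $\ltl$-witness $n''\ge n$ for $A\until B$, I distinguish $n'' = n$ (feeding the left disjunct $\tr{B}$) from $n'' > n$ (setting $n' = n''-1$ and verifying both conjuncts of the right disjunct).

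\textbf{The main obstacle} is keeping the index bookkeeping consistent across the two semantics: the $\ltl$-clause for until uses a half-open interval $n \le m < n'$ with $B$ holding \emph{at} $n'$, whereas the translation locates $B$ one step further (via $\x\,\tr{B}$, so $B$ holds at $n'+1$) and quantifies $\history$ over the \emph{closed} interval up to $n'$. Getting the off-by-one alignment right — and carefully handling the boundary case $n'' = n$ where the history is empty and the left disjunct must fire — is where the care is needed; everything else reduces to Corollary~\ref{cor:last} plus the induction hypothesis.
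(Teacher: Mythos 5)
Your proposal is correct and follows essentially the same route as the paper's own proof: structural induction on $A$, with Lemma~\ref{lem:last} (equivalently Corollary~\ref{cor:last}) used to collapse extended observation sequences back to singletons, and the $\until$ case handled by the same split into the $n''=n$ boundary case versus a strict witness with the $n'=n''-1$ off-by-one realignment between the half-open $\ltl$ interval and the closed $\history$ interval. No gaps.
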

\begin{theorem}
	\label{th:translation}
	Let $\Gammaltl$ be a set of $\ltl$-formulas, $A$ an $\ltl$-formula and $\tr{\Gammaltl} = \{ \tr{B} \, \mid \, B\in \Gammaltl \}$. Then $\,\Gammaltl \modelsltl A\,$ iff $\,\tr{\Gammaltl} \modelsltlh \tr{A} \, .$
\end{theorem}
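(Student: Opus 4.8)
The plan is to reduce Theorem~\ref{th:translation} to the pointwise equivalence already established in Lemma~\ref{lem:translation}, threading the quantifiers over $\ltl$-models and observation sequences carefully. First I would unfold both sides of the claimed equivalence $\Gammaltl \modelsltl A$ iff $\tr{\Gammaltl} \modelsltlh \tr{A}$ according to their definitions. The left-hand side says: for every $\ltl$-model $\m$, if $\m \modelsltl \Gammaltl$ then $\m \modelsltl A$; the right-hand side says: for every $\m$ and every $\interp$, if $\m,\interp \modelsltlh \tr{\Gammaltl}$ then $\m,\interp \modelsltlh \tr{A}$. The key observation is that both sides quantify over the \emph{same} class of $\ltl$-models $\m$, so it suffices to show, for a fixed $\m$, that $\m \modelsltl \Gammaltl$ implies $\m \modelsltl A$ holds if and only if the corresponding implication holds for every interpretation $\interp$.

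The central bridge is to relate the two notions of model satisfaction for a single formula. I would establish the auxiliary fact that, for any $\ltl$-formula $B$, we have $\m \modelsltl B$ iff $\m \modelsltlh \tr{B}$. By \emph{Definition}~\ref{def:ltl-truth}, $\m \modelsltl B$ means $\m,n \modelsltl B$ for every $n \in \nat$. By Lemma~\ref{lem:translation}, each conjunct $\m,n \modelsltl B$ is equivalent to $\m,[n] \modelsltlh \tr{B}$. Now I use Corollary~\ref{cor:last}: for any observation sequence $\sigma = [n_0,\ldots,n_k]$, truth of $\tr{B}$ at $\sigma$ depends only on the last element, i.e.~$\m,\sigma \modelsltlh \tr{B}$ iff $\m,[n_k] \modelsltlh \tr{B}$. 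Hence $\m,[n] \modelsltlh \tr{B}$ for every $n$ is the same as $\m,\sigma \modelsltlh \tr{B}$ for every observation sequence $\sigma$, which is precisely the definition of $\m \modelsltlh \tr{B}$. Chaining these equivalences gives $\m \modelsltl B$ iff $\m \modelsltlh \tr{B}$, as desired.

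With this lemma in hand the theorem follows almost formally. Applying it to each $B \in \Gammaltl$ gives $\m \modelsltl \Gammaltl$ iff $\m \modelsltlh \tr{\Gammaltl}$, and applying it to $A$ gives $\m \modelsltl A$ iff $\m \modelsltlh \tr{A}$. The only remaining gap is that the right-hand side of the theorem is stated in terms of structures $\s = \langle \m, \interp\rangle$ and satisfaction $\m,\interp \modelsltlh \tr{\Gammaltl}$, whereas my bridging lemma speaks of $\m \modelsltlh \tr{\Gammaltl}$. I would close this by noting that $\m \modelsltlh \varphi$ quantifies over \emph{all} observation sequences, while $\m,\interp \modelsltlh \alpha:A$ fixes the particular sequence $\interp^+(\alpha)$; since the translated formulas $\tr{B}$ are closed lwffs interpreted at every sequence, and since every observation sequence is realized as $\interp^+(\alpha)$ for a suitable choice of $\interp$ and $\alpha$, the universally quantified statement $\m \modelsltlh \tr{\Gammaltl} \text{ implies } \m \modelsltlh \tr{A}$ coincides with $\tr{\Gammaltl} \modelsltlh \tr{A}$.

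I expect the main obstacle to be precisely this last matching of quantifier structure between the model-level relation $\modelsltlh$ over $\m$ and the structure-level relation $\modelsltlh$ over $\langle\m,\interp\rangle$, since the theorem conflates the two notations. The argument must make explicit that the $\interp$ in a structure plays no role beyond selecting observation sequences, and that Corollary~\ref{cor:last} collapses the dependence on the interior of any sequence down to its final point; this is what allows the interpretation-indexed consequence relation to be rephrased as a plain model-level one. The remaining steps are routine bookkeeping over $\Gammaltl$.
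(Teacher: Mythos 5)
Your proposal is correct and follows essentially the same route as the paper: unfold both consequence relations, apply Lemma~\ref{lem:translation} pointwise to pass from $\m,n\modelsltl B$ to $\m,[n]\modelsltlh \tr{B}$, and then use Lemma~\ref{lem:last} (equivalently Corollary~\ref{cor:last}) to lift the quantification from singleton sequences $[n]$ to arbitrary observation sequences $\sigma$, which yields exactly the definition of $\m\modelsltlh\tr{B}$. The ``obstacle'' you flag at the end is actually a non-issue: since $\tr{\Gammaltl}$ and $\tr{A}$ are plain $\ltlh$-formulas rather than lwffs, the relation $\tr{\Gammaltl}\modelsltlh\tr{A}$ in the statement is by definition the model-level one (quantifying over observation sequences, with no interpretation $\interp$ involved), so no reconciliation with the structure-level relation is needed.
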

\begin{proof}
  By Definition \ref{def:ltl-truth}, $\Gammaltl \modelsltl A\,$ iff $\,\forall \m.\, \m\modelsltl \Gammaltl$ implies $\m \modelsltl A \,$ iff $\,\forall \m.\, (\, \forall B\in\Gammaltl.\, \forall n.\, \m,n \modelsltl B$ implies $\forall n. \, \m,n \modelsltl A \,)\,$ iff (by Lemma \ref{lem:translation}) $\,\forall \m.\, (\, \forall B\in\Gammaltl.\, \forall n.\, \m,[n] \modelsltlh \tr{B}$ implies $\forall n. \, \m,[n] \modelsltlh \tr{A} \,)\,$ iff (by Lemma \ref{lem:last}) $\,\forall \m.\, (\, \forall B\in\Gammaltl.\, \forall \sigma.\, \m,\sigma \modelsltlh \tr{B}$ implies $\forall \sigma. \, \m,\sigma \modelsltlh \tr{A} \,)\,$
iff (by Definition \ref{def:ltlh-truth}) $\,\forall \m.\, (\, \forall B\in\Gammaltl.\, \m \modelsltlh \tr{B}$ implies $\m \modelsltlh \tr{A} \,)\,$
iff $\,\forall \m.\, (\, \m \modelsltlh \tr{\Gammaltl}$ implies $\m \modelsltlh \tr{A} \,)\,$
iff $\,\tr{\Gammaltl} \modelsltlh \tr{A}$.
\end{proof}


\section{$\nltlh$: a labeled natural deduction system for $\ltlh$}\label{sec:system}

In this section, we will first define a labeled natural deduction system
$\nltlh$ that is sound with respect to the semantics of $\ltlh$. Then,
by considering a restriction of the set of $\nltlh$-derivations and by
using the translation $\tr{(\cdot)}$ and the related results, we will
show that $\nltlh$ can be also used for reasoning on $\ltl$: we will
prove soundness with respect to the semantics of $\ltl$ and we will give
a proof of weak completeness with respect to $\ltl$, by exploiting the
Hilbert-style axiomatization $\axltl$.


\subsection{The rules of $\nltlh$}\label{sub:rules}

\begin{figure}[tp]
  \scriptsize
  \begin{displaymath}
	\hspace*{-1cm}
    \renewcommand{\arraystretch}{4}
    \begin{array}{c}
      \infer[\bottomE]{\alpha_1: A}{\infer*{\alpha_2:\bottom}{[\alpha_1: A \limplies \bottom]}}
      \qquad
      \infer[\limpliesI]{\alpha: A \limplies B}{\infer*{\alpha: B}{[\alpha: A]}}
      \qquad
      \infer[\limpliesE]{\alpha: B}{\alpha: A \limplies B & \alpha: A}
      \\
      \infer[\gI]{\alpha b_1: \g A}{\infer*{\alpha b_1b_2:A}{[b_1 \descb b_2]}}
      \qquad
      \infer[\gE]{\alpha b_1 b_2: A}{\alpha b_1: \g A & b_1 \descb b_2}
      \qquad
      \infer[\ser]{\alpha: A}{\infer*{\alpha : A}{[b_1 \nextb b_2]}}
      \qquad
        \infer[\lin]{\alpha: A}{b_1 \nextb b_2 & b_1 \nextb b_3 & \varphi & \infer*{\alpha:A}{[\varphi[b_3/b_2]]}}
      \\
      \infer[\xI]{\alpha b_1: \x A}{\infer*{\alpha b_1b_2:A}{[b_1 \nextb b_2]}}
      \qquad
      \infer[\xE]{\alpha b_1b_2: A}{\alpha b_1: \x A & b_1 \nextb b_2}
      \qquad
      \infer[\refldesc]{\alpha : A}{\infer*{\alpha : A}{[b_1 \descb b_1]}}
      \qquad
      \infer[\transdesc]{\alpha : A}
       {
       b_1 \descb b_2 
       & 
       b_2 \descb b_3 
       & 
       \infer*{\alpha : A}{[b_1 \descb b_3]}
       }
     \\
       \infer[\historyI]{\alpha b_1 b_3: \history A}{\infer*{\alpha b_1 b_2:A}{[b_1 \descb b_2]\, [b_2\descb b_3]}}
      \qquad
      \infer[\historyE]{\alpha b_1 b_2: A}{\alpha b_1 b_3: \history A & b_1 \descb b_2 & b_2 \descb b_3}
     \qquad
     \infer[\last]{\alpha b: \local{A}}{\beta b:\local{A}}
      \qquad
     \infer[\eqdesc]{\alpha b_2:A}{b_1\descb b_2 & b_2 \descb b_1 & \alpha b_1:A}
\\
     \infer[\splitdesc]{\alpha : A}{b_1 \descb b_2 & \varphi & \infer*{\alpha :A}{[\varphi[b_2/b_1]]} & \infer*{\alpha:A}{[b_1 \nextb b'] \, [b' \descb b_2]}}
    \qquad
     \infer[\base]{\alpha : A}{b_1 \nextb b_2 & \infer*{\alpha :A}{[b_1 \descb b_2]}}
     \qquad
     \infer[\ind]{\alpha b: A}{\alpha b_0:A & b_0 \descb b & \infer*{\alpha b_j:A}{[b_0 \descb b_i] \, [b_i \nextb b_j] \, [\alpha b_i:A]}}
    \end{array}
  \end{displaymath}
The rules have the following side conditions:
\begin{itemize}
\item In $\xI$ ($\gI$), $b_2$ is \emph{fresh}, i.e.,~it is different from
$b_1$ and does not occur in any assumption on which $\alpha b_1 b_2:A$
depends other than the discarded assumption $b_1 \nextb b_2$ ($b_1
\descb b_2$).
\item In $\historyI$, $b_2$ is \emph{fresh}, i.e.,~it is different from
$b_1$ and $b_3$, and does not occur in any assumption on which $\alpha
b_1 b_2:A$ depends other than the discarded assumptions $b_1 \descb b_2$
and $b_2 \descb b_3$.
\item In $\last$, the formula must be of the form $\local{A}$, as defined  in \eqref{eq:local-language}.
\item In $\ser$, $b_2$ is fresh, i.e.,~it is different from $b$ and does
not occur in any assumption on which $\alpha:A$ depends other than the
discarded assumption $b_1 \nextb b_2$.
\item In $\splitdesc$, $b'$ is fresh, i.e.,~it is different from $b_1$
and $b_2$ and does not occur in any assumption on which $\alpha:A$
depends other than the discarded assumptions $b_1 \nextb b'$ and $b'
\descb b_2$.
\item In $\ind$, $b_i$ and $b_j$ are fresh, i.e.,~they are different from
each other and from $b$ and $b_0$, and do not occur in any assumption on
which $\alpha b_0 b_j:A$ depends other than the discarded assumptions of
the rule.
\end{itemize}
\caption{The rules of $\nltlh$} \label{fig:rules} 
\end{figure}

The rules of $\nltlh$ are given in Figure~\ref{fig:rules}. 
In $\nltlh$ we do not make use of a proper relational
labeling algebra (as, e.g., in~\cite{Vig:Labelled:00}) that contains
rules that derive rwffs from other rwffs or even lwffs. Since we are
mainly interested in the derivation of logical formulas, we rather
follow an approach that aims at simplifying the system: we use
rwffs only as assumptions for the derivation of
lwffs (as in Simpson's system for intuitionistic modal
logic~\cite{Sim:PhdThesis:94}). Thus, in $\nltlh$ there are no rules
whose conclusion is an rwff.

The rules $\limpliesI$ and $\limpliesE$ are just the labeled version of
the standard~\cite{Pra:NatDed:65} natural
deduction rules for implication introduction and elimination, where the
notion of \emph{discharged/open assumption} is also standard; e.g.,~$[\alpha:A]$ means that the formula is discharged in
$\limpliesI$. The rule $\bottomE$ is a labeled version of \emph{reductio
ad absurdum}, where
we do not constrain the time instant sequence
($\alpha_2$) in which we derive a contradiction to be the same
($\alpha_1$) as in the assumption.

The rules for the introduction and the elimination of $\g$ and $\x$
share the same structure since they both have a ``universal''
formulation. Consider, for instance, $\g$ and the corresponding relation
$\descb$. The idea underlying the introduction rule $\gI$ is that the
meaning of $\alpha b_1: \g A$ is given by the metalevel implication $b_1
\descb b_2 \Longrightarrow \alpha b_1 b_2:A$ for an arbitrary $b_2$
$\descb$-accessible from $b_1$ (where the arbitrariness of $b_2$ is
ensured by the side-condition on the rule). As we remarked above, the
operators $\g$ and $\x$ have a local nature, in that when we write
$\alpha b_1:\g A$ (and similarly for $\alpha b_1: \x A$) we are stating
that $\g A$ holds at time instant $b_1$, which is the last in the
sequence $\alpha b_1$. Hence, the elimination rule $\gE$ says that if
$b_2$ is $\descb$-accessible from $b_1$ (i.e.,~$b_1 \descb b_2$), then we
can conclude that $A$ holds for the sequence $\alpha b_1 b_2$. Similar
observations hold for $\x$ and the corresponding relation $\nextb$.

The rule $\ser$ models the fact that every time instant has an immediate
successor, while the rule $\lin$ specifies that such a successor must be
unique. $\ser$ tells us that if assuming $b_1 \nextb b_2$ we can derive
$\alpha:A$, then we can discharge the assumption and conclude that
indeed $\alpha:A$. $\lin$ is slightly more complex: assume that $b_1$
had two different immediate successors $b_2$ and $b_3$ (which we know cannot be) and assume
that the generic formula $\varphi$ holds; if by substituting $b_3$ for
$b_2$ in $\varphi$ we obtain $\alpha:A$, then we can discharge the
assumption and conclude that indeed $\alpha:A$.

Similarly, the rules $\refldesc$ and $\transdesc$ state  the
reflexivity and transitivity of $\descb$, while $\eqdesc$ captures
substitution of equals.\footnote{Recall that in this paper we use rwffs
only as assumptions for the derivation of lwffs, so we do not need a 
more general rule that concludes $\varphi[b_2/b_1]$ from $\varphi$, $b_1\descb b_2$ and $b_2 \descb b_1$.
}
The rule $\splitdesc$ states that if $b_1 \descb b_2$, then either $b_1 =
b_2$ or $b_1 < b_2$. The rule thus works in the style of a disjunction
elimination: if by assuming either of the two cases, we can derive a
formula $\alpha:A$, then we can discharge the assumptions and conclude
$\alpha:A$. Since we do not use $=$ and $<$ explicitly in our syntax, we
express such relations in an indirect way: the equality of $b_1$
and $b_2$ is expressed by 
replacing one with the
other in a generic formula $\varphi$, $<$ by the composition of $\nextb$
and $\descb$.

The rule $\base$ expresses the fact that $\descb$ contains $\nextb$,
while the rule
$\ind$ models the induction principle underlying the relation between
$\nextb$ and $\descb$. If (base case) $A$ holds in $\alpha b_0$ and if
(inductive step) by assuming that $A$ holds in $\alpha b_i$ for an
arbitrary $b_i$ $\descb$-accessible from $b_0$, we can derive that $A$
holds also in $\alpha b_j$, where $b_j$ is the immediate successor of
$b_i$, then we can conclude that $A$ holds in every $\alpha b$ such that
$b$ is $\descb$-accessible from $b_0$.\footnote{The rule is given only
in terms of relations between labels, since we
restrict the treatment of operators in the system to the
specific rules for their introduction and elimination.}

Finally, we have three rules that speak about the history and the label
sequences: the rules $\historyI$ and $\historyE$, which
we already described in the introduction, and  $\last$. This
rule expresses what we also anticipated in the introduction: the
standard operators (and connectives) of $\ltl$ speak not about sequences
of time instants but about single time instants, and thus if a
formula $A$ whose outermost operator is not $\history$ holds at $\beta
b$, then we can safely replace 
$\beta$ by any other sequence $\alpha$ and conclude that $A$ holds at
$\alpha b$. To formalize this, we define the set of (well-formed)
\emph{$\ltllocal$-formulas} (denoted by $\local{A}$) by means of the grammar
	\begin{align}
	\local{A} & \ \, ::= \ \, p \mid \bottom \mid (\local{A}) \limplies (\local{A}) \mid \g (\hist{A}) \mid \x (\hist{A}) 
 \label{eq:local-language}	\\
	\hist{A} & \ \, ::= \ \, \local{A} \mid (\hist{A}) \limplies (\hist{A}) \mid \history (\hist{A})\notag
\end{align}
where $p$ is a propositional symbol. Hence, in a formula $\local{A}$,
the history operator $\history$ can only appear in the scope of a
temporal operator $\g$ (and thus of $\f$ as in the translation
\eqref{eq:translation}) or $\x$. The rule $\last$ applies to these
formulas only; in fact, the ``$l$'' in $\local{A}$ stands for ``last'',
but it also conveniently evokes both ``local'' and ``$\ltl$''. For
formulas $\history A$ whose outermost operator is the history operator
$\history$, such a rule does not make sense (and in fact is not sound)
as it would mean changing the interval over which $A$ holds.

Such considerations are formalized in the following lemma, where we
prove, for $\ltllocal$-formulas, a result that is the analogous of the
one given in Lemma \ref{lem:last} with respect to the translation of
$\ltl$-formulas.\footnote{In fact, 
Lemma~\ref{lem:last} is a direct consequence of Lemma~\ref{lem:last-Al}
and of Lemma~\ref{lem:tr-ltl-is-ltllocal} below.} 
At the same time, we also prove that if $A$ is a formula belonging to the syntactic category $\hist{A}$ of the grammar \eqref{eq:local-language} (we will call such formulas \emph{$\ltlhistory$-formulas}), then the truth value of $A$ depends on at most the last two elements of an observation sequence.

\begin{lemma}
	\label{lem:last-Al}
	Let $\m$ be an $\ltl$-model, $[n_1, \ldots, n_k]$ an observation sequence, $\local{A}$ an $\ltllocal$-formula and $\hist{A}$ an $\ltlhistory$-formula. Then: $(i)$
            $\m, [n_1, \ldots, n_k] \modelsltlh \local{A}$ iff $\m, [m_1, \ldots, m_r, n_k] \modelsltlh \local{A}$ for every sequence $m_1, \ldots, m_r \,$ and $(ii)$
            $\m, [n_1, \ldots,$ $n_{k-1},$ $n_k]$ $\modelsltlh \hist{A}$ iff $\m, [m_1, \ldots, m_r, n_{k-1}, n_k] \modelsltlh \hist{A}$ for every sequence $m_1, \ldots, m_r$.
\end{lemma}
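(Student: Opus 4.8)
The plan is to prove (i) and (ii) simultaneously by induction on the complexity of the formula, following the mutually recursive grammar~\eqref{eq:local-language}. Since an $\ltllocal$-formula may contain an $\ltlhistory$-formula (under $\g$ or $\x$) and an $\ltlhistory$-formula may itself be an $\ltllocal$-formula, the two statements cannot be separated. For formulas of a given complexity, I would first establish (i) for the $\ltllocal$-formulas and only then (ii) for the $\ltlhistory$-formulas; this ordering is what makes the induction well-founded, because the one ``non-decreasing'' call---proving (ii) for a formula that is already an $\ltllocal$-formula---appeals to (i) for that same formula, which has by then been proved, while every other recursive appeal is to a strictly smaller formula. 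Note also that (i) is strictly stronger than (ii): dependence on the last element alone implies dependence on the last two.

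For (i), the atomic cases are immediate, since $\m,[n_1,\ldots,n_k]\modelsltlh p$ reads off only $\val(n_k)$ and $\bottom$ is nowhere true. The case $(\local{B})\limplies(\local{C})$ follows at once from the induction hypothesis (i) applied to $\local{B}$ and $\local{C}$, both evaluated at the same sequence. The substantive cases are $\g(\hist{B})$ and $\x(\hist{B})$, where the truth clause appends a new final element: comparing $\m,[n_1,\ldots,n_k]\modelsltlh\g\hist{B}$ with $\m,[m_1,\ldots,m_r,n_k]\modelsltlh\g\hist{B}$ reduces, for each $m\ge n_k$, to comparing $\m,[n_1,\ldots,n_k,m]\modelsltlh\hist{B}$ with $\m,[m_1,\ldots,m_r,n_k,m]\modelsltlh\hist{B}$. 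These two sequences share their last two elements $n_k,m$, so the induction hypothesis (ii) for $\hist{B}$ gives the equivalence for each $m$, hence for the universally quantified clauses; the $\x$ case is identical, with $m=n_k+1$.

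For (ii), if $\hist{A}$ is an $\ltllocal$-formula the claim is immediate from (i), already available for that formula. The case $(\hist{B})\limplies(\hist{C})$ follows from the induction hypothesis (ii) on the two subformulas at the common sequence. In the case $\history(\hist{B})$, the truth clause gives $\m,[n_1,\ldots,n_{k-1},n_k]\modelsltlh\history\hist{B}$ iff $\m,[n_1,\ldots,n_{k-1},m]\modelsltlh\hist{B}$ for all $n_{k-1}\le m\le n_k$, and likewise for the prefixed sequence; for each such $m$ the sequences $[n_1,\ldots,n_{k-1},m]$ and $[m_1,\ldots,m_r,n_{k-1},m]$ agree on their last two elements $n_{k-1},m$, so the induction hypothesis (ii) for $\hist{B}$ yields the equivalence and the two quantified conditions coincide.

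The crux---and the reason (ii) must speak of two elements rather than one---is the interaction of $\history$ with $\g$ and $\x$. The operator $\history$ genuinely inspects the second-to-last position (it quantifies from $n_{k-1}$ up to $n_k$), so (ii) cannot be tightened to dependence on the last element alone; yet $\g$ and $\x$ push a fresh position onto the sequence, turning the former last element into the new second-to-last one. The ``last two elements'' bound in (ii) is calibrated to be exactly invariant under both phenomena, and verifying that these dovetail---rather than any single computation---is the heart of the argument. One routine point to check throughout is that in the $\g/\x$ and $\history$ cases the relevant sequences have length at least two, so that the general ``$0<k$'' clause for $\history$ is the one in force and the degenerate single-element clause never intervenes.
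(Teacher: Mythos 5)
Your proposal is correct and follows essentially the same route as the paper: a simultaneous induction on formula complexity over the mutually recursive grammar \eqref{eq:local-language}, with the $\g$/$\x$ cases of $(i)$ discharged by appeal to $(ii)$ for the appended sequence and the $\hist{A}=\local{B}$ case of $(ii)$ discharged by appeal to $(i)$. Your explicit remarks on why the induction is well-founded (handling the non-size-decreasing appeal from $(ii)$ to $(i)$ on the same formula) and on why the ``last two elements'' bound is exactly what survives both $\history$ and the $\g$/$\x$ push are welcome clarifications that the paper leaves implicit.
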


Given the rules in Fig.~\ref{fig:rules}, the notions of
\emph{derivation}, \emph{assumption} (\emph{open} or \emph{discharged},
as we remarked) and \emph{conclusion} are the standard ones for natural
deduction systems~\cite{Pra:NatDed:65}. We
write $\Phi \vdashltlh \alpha:A$ to say that there exists a derivation
of $\alpha:A$ in the system $\nltlh$ whose open assumptions are all
contained in the set of formulas $\Phi$. A derivation of $\alpha:A$ in
$\nltlh$ where all the assumptions are discharged is a \emph{proof} of
$\alpha:A$ in $\nltlh$ and we then say that $\alpha:A$ is a
\emph{theorem} of $\nltlh$ and write $\vdashltlh \alpha:A$.

To denote that $\Pi$ is a derivation of $\alpha:A$ whose set of
assumptions may contain the formulas $\varphi_1, \ldots, \varphi_n$, we write
\begin{displaymath}\small
  \vcenter{
     	\deduce{\alpha:A}{\deduce{\Pi}{\varphi_1 \ldots \varphi_n}}
  }
\end{displaymath}
%

If we are interested in $\ltl$-reasoning, then we can restrict our
attention to a subset of the $\nltlh$-derivations, namely, to the
derivations where the conclusion and all the open assumptions correspond
to the translations of $\ltl$-formulas.

\begin{definition}\em
  \label{def:vdashltl}
  Let $\Pi$ be a derivation in $\nltlh$ and $\Phi$ the set containing the conclusion and the open assumptions of $\Pi$. We say that $\Pi$ is an \emph{$\ltl$-derivation} iff there exists a label $b$ such that
for every $\varphi$ in $\Phi$ there exists an $\ltl$-formula $A$ such that $\varphi = b:\tr{A}$.
We write $\Gammaltl \vdashltl A$ to denote that in $\nltlh$ there exists
an $\ltl$-derivation of $b:\tr{A}$ from open assumptions in a set
$\Phi$, where $\Gammaltl = \{ B \,\mid\, b:\tr{B}\in \Phi
\}$.
\end{definition}

In Definition~\ref{def:vdashltl}, we require all the open assumptions and
the conclusion of an $\ltl$-derivation to be lwffs labeled by the same
single label $b$. Note that, as a consequence of
Corollary~\ref{cor:last}, we would obtain the same notion of
$\ltl$-derivation by requiring instead that such formulas were labeled
by the same sequence $\alpha$.

In Section~\ref{sub:sandc}, we will show that $\nltlh$ is sound
with respect to the semantics of $\ltlh$ and, by considering the notion of $\ltl$-derivability $\vdashltl$, that it is sound and weakly complete with
respect to $\ltl$. 
An investigation of
completeness with respect to $\ltlh$ is left for future work, together
with the formalization of an axiomatization of $\ltlh$.

Related to this, it is important to understand what exactly is the
relationship of the class of $\ltllocal$-formulas and the class of
$\ltl$-formulas, in particular with respect to the translation
$\tr{(\cdot)}$. It is not difficult to see that the co-domain of the
translation is included in $\ltllocal$ by construction of
$\tr{(\cdot)}$, i.e.,~by induction on the formula complexity it follows
that:

\begin{lemma}
  \label{lem:tr-ltl-is-ltllocal}
	If $A$ is an $\ltl$-formula, then $\tr{A}$ is an $\ltllocal$-formula.
\end{lemma}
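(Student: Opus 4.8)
The plan is to proceed by structural induction on the $\ltl$-formula $A$, unfolding the definition of $\tr{(\cdot)}$ and matching the result against the grammar \eqref{eq:local-language}. Before the induction I would record three closure facts about the two mutually recursive categories. First, every $\ltllocal$-formula is also an $\ltlhistory$-formula, since the grammar opens $\hist{A}$ with the production $\hist{A} ::= \local{A}$. Second, since $\bottom$ is in $\local{A}$ (hence in $\hist{A}$) and both categories are closed under $\limplies$ with both immediate subformulas taken from the same category, both $\local{A}$ and $\hist{A}$ are closed under the Boolean abbreviations $\neg$, $\vee$ and $\wedge$, each of which unfolds into $\limplies$ and $\bottom$; for instance, if $X$ is in $\hist{A}$ then $\neg X = X \limplies \bottom$ is in $\hist{A}$. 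Third, and crucially, if $C$ is in $\hist{A}$ then $\f C$ is in $\local{A}$: unfolding $\f C = \neg \g \neg C$, the subformula $\neg C$ is in $\hist{A}$ by the second fact, so $\g \neg C$ lands in $\local{A}$ via the production $\local{A} ::= \g(\hist{A})$, and then $\neg \g \neg C$ is again in $\local{A}$ by closure of $\local{A}$ under $\neg$.

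With these facts in hand the induction is routine for every connective except $\until$. The base cases $\tr{(p)} = p$ and $\tr{(\bottom)} = \bottom$ are immediate, and $\tr{(A \limplies B)} = \tr{(A)} \limplies \tr{(B)}$ follows from the induction hypothesis together with the production $\local{A} ::= (\local{A}) \limplies (\local{A})$. For $\tr{(\g A)} = \g \tr{(A)}$ and $\tr{(\x A)} = \x \tr{(A)}$, the induction hypothesis gives that $\tr{(A)}$ is in $\local{A}$, hence in $\hist{A}$, so $\g \tr{(A)}$ and $\x \tr{(A)}$ are in $\local{A}$ by the productions $\local{A} ::= \g(\hist{A})$ and $\local{A} ::= \x(\hist{A})$.

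The case $\until$ is where the real work lies, and I would handle it by building $\tr{(A \until B)} = \tr{(B)} \vee (\f(\x \tr{(B)} \wedge \history \tr{(A)}))$ up from the inside. By the induction hypothesis $\tr{(A)}$ and $\tr{(B)}$ are both in $\local{A}$, hence in $\hist{A}$. Then $\x \tr{(B)}$ is in $\local{A}$ by $\local{A} ::= \x(\hist{A})$, and $\history \tr{(A)}$ is in $\hist{A}$ by $\hist{A} ::= \history(\hist{A})$; their conjunction $C = \x \tr{(B)} \wedge \history \tr{(A)}$ is in $\hist{A}$ by the Boolean closure of $\hist{A}$; the third preliminary fact then gives that $\f C$ is in $\local{A}$; and finally $\tr{(B)} \vee \f C$ is in $\local{A}$ by the Boolean closure of $\local{A}$. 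The essential obstacle --- and the reason the grammar is split into these two categories in precisely this way --- is that the occurrence of $\history$ introduced by the translation must always come out guarded by the $\g$ hidden inside $\f$; the third preliminary fact is exactly the step that absorbs an $\ltlhistory$-formula under $\f$ and returns an $\ltllocal$-formula, so that no unguarded $\history$ can ever reach the top level of $\tr{(A)}$.
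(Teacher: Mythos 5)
Your proof is correct and follows exactly the route the paper intends: the paper states the lemma as following ``by induction on the formula complexity'' from the construction of $\tr{(\cdot)}$ and gives no further detail, and your structural induction is precisely that argument carried out in full. The only place where any care is actually needed is the one you isolate --- checking that the derived connectives $\neg$, $\vee$, $\wedge$ and $\f$ keep you inside the right syntactic category, so that $\f(\x\tr{(B)}\wedge\history\tr{(A)})$ lands in $\local{A}$ with the $\history$ guarded under $\g$ --- and your three preliminary closure facts handle this correctly.
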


The other direction is trickier, as it basically amounts to defining an
inverse translation. To solve this problem, 
we have been considering normal forms of $\ltllocal$-formulas 
and we conjecture that the following fact indeed holds.
\begin{conjecture}
 If $A$ is an $\ltllocal$-formula, then there exists an $\ltl$-formula $B$ such that $B^*$ is semantically equivalent to $A$.
\end{conjecture}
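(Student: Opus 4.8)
The plan is to invert $\tr{(\cdot)}$ by structural induction, using the preceding lemmas to collapse everything to single points. By Lemma~\ref{lem:last-Al}$(i)$, an $\ltllocal$-formula is determined by the \emph{last} element of an observation sequence, and by Lemmas~\ref{lem:tr-ltl-is-ltllocal} and~\ref{lem:translation} the translation $\tr{B}$ of any $\ltl$-formula is so as well; hence it suffices to produce, for each $\ltllocal$-formula $A$, an $\ltl$-formula $B$ with $\m,[n]\modelsltlh A$ iff $\m,n\modelsltl B$ for all $\m$ and $n$. I would proceed by induction on $A$. The cases $p$, $\bottom$ and $\local{A_1}\limplies\local{A_2}$ are immediate, since $\tr{(\cdot)}$ commutes with these constructors, so the whole difficulty is concentrated in the two cases $\g(\hist{A})$ and $\x(\hist{A})$, where the $\ltlhistory$-formula $\hist{A}$ may contain occurrences of $\history$.

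For these cases I would first fix the interval semantics of $\ltlhistory$-formulas: by Lemma~\ref{lem:last-Al}$(ii)$ the truth of $\hist{A}$ depends on at most the last two elements of a sequence, so for $a\le b$ I write $\m,a,b\Vdash\hist{A}$ for $\m,[a,b]\modelsltlh\hist{A}$. Then $\m,[n]\modelsltlh\g(\hist{A})$ iff $\m,n,b\Vdash\hist{A}$ for all $b\ge n$, and $\m,[n]\modelsltlh\x(\hist{A})$ iff $\m,n,n+1\Vdash\hist{A}$. The crucial observation is that, unfolding $\hist{A}$ at lower bound $n$, \emph{every} nested $\history$ retains the same lower bound $n$, namely the evaluation point: $\history\psi$ at $(n,c)$ quantifies $\psi$ over $(n,m)$ for $m\in[n,c]$, again with lower bound $n$. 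Reading the suffix of $\m$ from $n$ as a fresh model whose origin is $n$, each such $\history$ thus behaves exactly as the past operator $\h$ (``has always held since the origin''). I would therefore define a past-and-future $\ltl$-formula $\widetilde{A}$ by replacing in $\hist{A}$ each maximal $\ltllocal$-subformula by its $\ltl$-equivalent (from the induction hypothesis) and each remaining $\history$ by $\h$; a routine induction on $\hist{A}$ then gives $\g(\hist{A})\equiv\g\widetilde{A}$ and $\x(\hist{A})\equiv\x\widetilde{A}$ as properties of the origin of that suffix.

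It remains to eliminate the past. I would appeal to Gabbay's separation theorem over $\langle\nat,\le\rangle$, by which every $\ltl$-formula with past operators is equivalent, \emph{at the initial instant}, to a pure-future one (at time $0$ the pure-past subformulas collapse to present conditions). This yields future-only formulas $\Theta_{\g}$ and $\Theta_{\x}$ equivalent at the origin to $\g\widetilde{A}$ and $\x\widetilde{A}$. Since the truth of a future $\ltl$-formula at $n$ coincides with its truth at the origin of the suffix from $n$, the \emph{same} $\Theta_{\g},\Theta_{\x}$ serve uniformly at every $n$, providing the required $B$ for the two temporal cases and closing the induction. Note that the until operator re-enters the picture precisely through this separation step, which is where the expressive power needed to match $\history$ is recovered.

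The main obstacle is the interaction of $\history$ with disjunction and implication: $\history$ is idempotent and distributes over conjunction, but \emph{not} over disjunction, so no naive syntactic inverse can simply push $\history$ down onto atoms. My plan sidesteps this by routing through full $\ltl$ with past and borrowing expressive completeness from Kamp/Gabbay; the cost is that the genuinely hard content is imported rather than proved. A self-contained alternative --- apparently the normal-form route the authors are pursuing --- would instead establish a normal form for $\ltlhistory$-formulas directly, and there the disjunction-under-$\history$ case, together with an induction bounding the $\history$-nesting, is exactly where the work concentrates. Two bookkeeping points also deserve care: verifying that the successive suffix shifts and the separation step all refer to evaluation at an origin (so that the collapse of past operators is legitimate), and checking the $\x$-case, where the interval $[n,n+1]$ is finite and the argument specializes to a bounded unfolding.
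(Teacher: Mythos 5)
The paper does not actually prove this statement: it is left explicitly as a conjecture, with only a hint that the authors intend to attack it via normal forms for $\ltllocal$-formulas. Your route is therefore genuinely different from anything in the paper, and as an outline it looks sound to me. The two load-bearing steps both check out. First, the structural observation: in the semantic clause for $\history$ the penultimate element of the observation sequence is passed down unchanged, so once a $\g$ or $\x$ has fixed a lower bound $n$, every $\history$ nested below it (outside of maximal $\ltllocal$-subformulas, which re-anchor at a new last element and are discharged by the induction hypothesis together with Lemma~\ref{lem:last-Al}$(i)$) quantifies over the initial segment $[n,\cdot]$ of the suffix model rooted at $n$; this is exactly the reflexive ``historically'' operator, and your compositional translation $\widetilde{(\cdot)}$ (maximal $\ltllocal$-subformulas $\mapsto$ their $\ltl$-equivalents, $\limplies \mapsto \limplies$, $\history \mapsto \h$) is verified by a routine induction on the $\ltlhistory$ grammar using Lemma~\ref{lem:last-Al}$(ii)$. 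Second, the appeal to initial equivalence of $\ltl$ with past and pure-future $\ltl$ over $\langle\nat,\le\rangle$ (Gabbay separation / Kamp) is legitimate, and the transfer from the origin of the suffix model back to position $n$ of $\m$ is valid precisely because the separated formula is pure-future and hence suffix-invariant. What your approach buys is a non-constructive existence proof of the inverse translation at the cost of importing a deep expressive-completeness theorem; the authors' normal-form route, if carried out, would presumably yield an explicit syntactic inverse usable in their completeness programme for $\ltlh$, which yours does not. Two points to pin down in a full write-up: the $\h$ you introduce must be the reflexive variant (the clause for $\history$ includes both endpoints, and the degenerate singleton clause never arises because $\history$ occurs only under $\g$ or $\x$ in $\ltllocal$-formulas), and the separation step should be cited in the precise form ``initially equivalent over $\nat$-models'', since everywhere-equivalence is neither needed nor what that theorem delivers here.
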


\subsection{Soundness and completeness}\label{sub:sandc}


\begin{theorem}\label{th:ltlh-soundness}
For every set $\Phi$ of labeled and relational formulas and every
labeled formula $\alpha:A$, if $\Phi \vdashltlh \alpha:A$, then $\Phi
\modelsltlh \alpha:A$.
\end{theorem}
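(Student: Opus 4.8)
The plan is to argue by induction on the structure of the derivation $\Pi$ that witnesses $\Phi \vdashltlh \alpha:A$. Unfolding Definition~\ref{def:ltlh-truth}, the goal $\Phi \modelsltlh \alpha:A$ reads: for every $\ltl$-model $\m$ and every $\interp:\lab\to\nat$, if $\m,\interp\modelsltlh\varphi$ for all $\varphi\in\Phi$ then $\m,\interp^+(\alpha)\modelsltlh A$. The base case, where $\Pi$ is a single open assumption $\alpha:A\in\Phi$, is immediate. For the inductive step I would perform a case analysis on the last rule applied, taking as induction hypothesis that each immediate subderivation is sound with respect to its own open assumptions. Throughout, I will use that an rwff $b\descb c$ (resp.\ $b\nextb c$) is forced by $\m,\interp$ exactly when $\interp(b)\le\interp(c)$ (resp.\ $\interp(c)=\interp(b)+1$), and that, since the system has no rule with an rwff as conclusion, every rwff occurring as a premise is either an open assumption or one discharged by the rule at hand.

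The propositional rules $\limpliesI$, $\limpliesE$ and $\bottomE$ are discharged exactly as in the standard labeled setting by reading off the clause for $\limplies$; here $\bottomE$ works as a metalevel reductio, using $\m,\interp\nmodelsltlh\alpha:\bottom$. The elimination rules $\gE$ and $\xE$ follow at once from the truth clauses for $\g$ and $\x$: e.g.\ for $\gE$, the premise $b_1\descb b_2$ gives $\interp(b_1)\le\interp(b_2)$ and the subderivation gives $\m,\interp^+(\alpha b_1)\modelsltlh\g A$, whence $A$ holds at $[\interp^+(\alpha b_1),\interp(b_2)]=\interp^+(\alpha b_1 b_2)$. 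The purely relational rules $\refldesc$, $\transdesc$, $\eqdesc$ and $\base$ are sound because $\le$ is reflexive and transitive, contains the successor relation, and is antisymmetric (for $\eqdesc$, $\interp(b_1)\le\interp(b_2)\le\interp(b_1)$ forces $\interp^+(\alpha b_1)=\interp^+(\alpha b_2)$). For $\lin$ and $\splitdesc$, which involve the substitution $\varphi[\cdot/\cdot]$, I would use that under $\interp$ the formula $\varphi[c/b]$ has the same truth value as $\varphi$ whenever $\interp(b)=\interp(c)$: functionality of $\nextb$ yields $\interp(b_2)=\interp(b_3)$ for $\lin$, while for $\splitdesc$ the case split is between $\interp(b_1)=\interp(b_2)$ (feeding the first subderivation) and $\interp(b_1)<\interp(b_2)$ (feeding the second, after choosing the fresh $b'$ to denote $\interp(b_1)+1$), and this is exhaustive because $\interp(b_1)\le\interp(b_2)$.

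The rules carrying a freshness side condition --- $\gI$, $\xI$, $\historyI$ and $\ser$ --- are all handled by the same device: fixing $\m,\interp$ that satisfy the open assumptions of the conclusion, one re-interprets the fresh (hence, by the side condition, nowhere-else-occurring) label to hit an arbitrary target value, thereby discharging the associated relational assumptions while leaving the remaining open assumptions satisfied. Concretely, for $\gI$ one must show $\m,[\interp^+(\alpha b_1),m]\modelsltlh A$ for every $m\ge\interp(b_1)$; taking $\interp'$ equal to $\interp$ except $\interp'(b_2)=m$ makes $b_1\descb b_2$ true, and the induction hypothesis applied to $\interp'$ delivers exactly this judgement. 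The cases $\xI$ and $\ser$ are analogous, and $\historyI$ is the same modulo the interval clause for $\history$: one ranges over all $m$ with $\interp(b_1)\le m\le\interp(b_3)$, sets $\interp'(b_2)=m$ to satisfy both $b_1\descb b_2$ and $b_2\descb b_3$, and reads off $A$ at $[\interp^+(\alpha),\interp(b_1),m]$; dually, $\historyE$ instantiates this universally quantified statement at $m=\interp(b_2)$. The single rule needing external input is $\last$, whose soundness is precisely Lemma~\ref{lem:last-Al}$(i)$: since $\beta b:\local{A}$ and $\alpha b:\local{A}$ are evaluated at observation sequences with the same last element $\interp(b)$, and the truth of an $\ltllocal$-formula depends only on that last element, the conclusion follows from the premise.

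I expect the genuine obstacle to be the induction rule $\ind$, whose soundness is not the unfolding of any single truth clause but requires a metalevel induction on $\nat$. Fixing $\m,\interp$ that satisfy the open assumptions (in particular the premise $b_0\descb b$, so $\interp(b_0)\le\interp(b)$), I would prove by induction on $n$ that $\m,[\interp^+(\alpha),n]\modelsltlh A$ for every $n$ with $\interp(b_0)\le n$. The base $n=\interp(b_0)$ is the premise $\alpha b_0:A$. For the step, assuming the claim at $n$, I re-interpret the fresh labels of the inductive-step subderivation by $\interp'(b_i)=n$ and $\interp'(b_j)=n+1$: then $b_0\descb b_i$ holds (as $\interp(b_0)\le n$), $b_i\nextb b_j$ holds, and the discharged assumption $\alpha b_i:A$ is exactly the metalevel induction hypothesis at $n$; the structural induction hypothesis on the subderivation then yields $A$ at $[\interp^+(\alpha),n+1]$. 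Instantiating the resulting statement at $n=\interp(b)$ gives $\m,\interp^+(\alpha b)\modelsltlh A$, as required. The delicate bookkeeping here --- aligning the object-level freshness conditions with the metalevel reinterpretations, and checking that each reinterpretation leaves the non-discharged open assumptions of the subderivation intact --- is where the argument must be carried out with care; once it is in place, the soundness of $\ind$, and hence of the whole system, follows.
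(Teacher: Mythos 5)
Your proposal is correct and follows essentially the same route as the paper: a structural induction on the derivation with a case analysis on the last rule, reinterpreting fresh labels to arbitrary targets for $\gI$, $\xI$, $\historyI$, $\ser$, $\splitdesc$ and $\ind$ (with the freshness side conditions guaranteeing the remaining open assumptions stay satisfied), invoking Lemma~\ref{lem:last-Al} for $\last$, and running a metalevel induction on $\nat$ for $\ind$. The paper only writes out a handful of representative cases ($\historyI$, $\last$, and in the appendix $\gI$, $\gE$, $\historyE$, $\ind$), but the cases you sketch for the remaining rules are handled exactly as you describe.
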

\begin{proof}
The proof proceeds by induction on the structure of the derivation of
$\alpha:A$. The base case is when $\alpha:A \in \Phi$ and is trivial.
There is one step case for every rule and we show here only the two
representative cases
\begin{displaymath}\footnotesize
\infer[\historyI]{\beta b_1 b_3:\history B}
 {
 \deduce{\beta b_1 b_2:B}{\deduce{\Pi}{[b_1 \descb b_2] & [b_2 \descb b_3]}}
 }
\qquad \quad {\normalsize \text{and}} \qquad \quad
\infer[\last]{\beta b: A}
 {
  \deduce{\beta' b:A}{
      \Pi
    }
 }
\end{displaymath}
\begin{SHORT}
Some more cases are in
\cite{}.
\end{SHORT}
\begin{LONG}
Some more cases are in
Appendix~\ref{ap:ltlh-soundness}.
\end{LONG}
First, consider the case in which the last rule application is a $\historyI$, where $\alpha = \beta b_1 b_3$, $A=\history B$, and 
$\Pi$ is a proof of $\beta b_1 b_2: B$ from hypotheses in $\Phi'$, with $b_2$ fresh and with $\Phi' = \Phi \cup \{b_1 \descb b_2\} \cup \{b_2 \descb b_3\}$. By the induction hypothesis, for every interpretation $\interp$, if $\m, \interp \modelsltlh \Phi'$, then $\m, \interp \modelsltlh  \beta b_1 b_2:B$. We let $\interp$ be any interpretation such that $\m, \interp \modelsltlh \Phi$, and show that $\m, \interp \modelsltlh \beta b_1 b_3:\history B$. Let $\interp(b_1) = n$, $\interp(b_3) = m$ and $\interp^+(\beta) = [n_1, \ldots, n_k]$. Since $b_2$ is fresh, we can extend $\interp$ to an interpretation (still called $\interp$ for simplicity) such that $\interp(b_2)=n+i$ for an arbitrary $0 \le i \le m$. The induction hypothesis yields $\m, \interp \modelsltlh \beta b_1 b_2:B$, i.e.,~$\m , [n_1, \ldots, n_k, n, n+i]\modelsltlh B$, and thus, since $i$ is an arbitrary point between $0$ and $m$, we obtain $\m, [n_1, \ldots, n_k, n, n+m] \modelsltlh \history B$. It follows $\m, \interp \modelsltlh \beta b_1 b_3: \history B$.

Now consider the case in which the last rule applied is $\last$ and $\alpha = \beta b$, where $\Pi$ is a proof of $\beta' b: A$ from hypotheses in $\Phi$. By applying the induction hypothesis on $\Pi$, we have $\Phi \modelsltlh \beta' b: A$.
We proceed by considering a generic $\ltl$-model $\m$ and a generic interpretation $\interp$ on it such that $\m, \interp \modelsltlh \Phi$ and showing that this entails $\m, \interp \modelsltlh \beta b:A$.
By the induction hypothesis, $\m, \interp \modelsltlh \beta' b: A$, i.e.,~$\m, \interp^+(\beta' b) \modelsltlh A$ by~Definition \ref{def:ltlh-truth}. Since $A$ is an $\ltllocal$-formula by the side condition of the rule and the two observation sequences $\interp^+(\beta' b)$ and $\interp^+(\beta b)$ share the same last element $\interp(b)$, we can apply Lemma~\ref{lem:last-Al} and obtain $\m, \interp^+(\beta b) \modelsltlh A$, i.e.,~$\m, \interp \modelsltlh \beta b:A$ by~Definition \ref{def:ltlh-truth}.
\end{proof}        

By exploiting the translation of Section \ref{sub:translation} and the notion of $\ltl$-derivation of Definition \ref{def:vdashltl}, we also prove a result of soundness with respect to $\ltl$.
\begin{theorem}\label{th:ltl-soundness}
  For every set $\Gammaltl$ of $\ltl$-formulas and every $\ltl$-formula $A$, if $\Gammaltl \vdashltl A$, then $\Gammaltl \modelsltl A$.
\end{theorem}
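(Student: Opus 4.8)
The plan is to chain together the soundness of $\nltlh$ with respect to the $\ltlh$-semantics (Theorem~\ref{th:ltlh-soundness}) and the validity-preserving translation (Theorem~\ref{th:translation}), using the locality of translated formulas (Corollary~\ref{cor:last}) as the bridge between the two. First I would unfold the hypothesis $\Gammaltl \vdashltl A$ via Definition~\ref{def:vdashltl}: it provides a label $b$ and an $\ltl$-derivation $\Pi$ in $\nltlh$ whose conclusion is $b:\tr{A}$ and whose open assumptions form the set $\Phi = \{\, b:\tr{B} \mid B \in \Gammaltl \,\}$. In particular $\Phi \vdashltlh b:\tr{A}$, so Theorem~\ref{th:ltlh-soundness} immediately yields the semantic consequence $\Phi \modelsltlh b:\tr{A}$.

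Next I would translate this labeled consequence into a statement about single points. By Definition~\ref{def:ltlh-truth}, $\m,\interp \modelsltlh b:\tr{C}$ holds iff $\m,[\interp(b)] \modelsltlh \tr{C}$, and since $b$ is the only label occurring in $\Phi$ and in $b:\tr{A}$, letting $\interp$ range over all interpretations makes $\interp(b)$ range over all of $\nat$. Hence $\Phi \modelsltlh b:\tr{A}$ is equivalent to the pointwise statement $(\star)$: for every $\ltl$-model $\m$ and every $n \in \nat$, if $\m,[n] \modelsltlh \tr{B}$ for all $B \in \Gammaltl$, then $\m,[n] \modelsltlh \tr{A}$.

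Then I would derive the model-level consequence $\tr{\Gammaltl} \modelsltlh \tr{A}$ from $(\star)$. Here is where locality enters: by Lemma~\ref{lem:tr-ltl-is-ltllocal} each $\tr{C}$ is an $\ltllocal$-formula, so Corollary~\ref{cor:last} gives $\m \modelsltlh \tr{C}$ iff $\m,[n] \modelsltlh \tr{C}$ for every $n$. Thus, given $\m$ with $\m \modelsltlh \tr{\Gammaltl}$, for each fixed $n$ we have $\m,[n]\modelsltlh\tr{B}$ for all $B \in \Gammaltl$, so $(\star)$ delivers $\m,[n]\modelsltlh\tr{A}$; as $n$ was arbitrary, $\m \modelsltlh \tr{A}$. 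This establishes $\tr{\Gammaltl} \modelsltlh \tr{A}$, whence Theorem~\ref{th:translation} concludes $\Gammaltl \modelsltl A$, as desired.

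The main obstacle is conceptual rather than computational: the consequence relation delivered by the $\nltlh$-soundness theorem is \emph{local} in the label (it quantifies over the value $\interp(b)$ of a single time instant), whereas $\modelsltl$ is a \emph{global} consequence (truth at every point of a model implying truth at every point). Reconciling the two hinges on the fact that the translation lands in $\ltllocal$ and that such formulas are evaluated only at the last element of an observation sequence; without Corollary~\ref{cor:last} the pointwise statement $(\star)$ could not be repackaged as the global premise that $\modelsltl$ and Theorem~\ref{th:translation} require. The remaining steps are routine unfoldings of the definitions.
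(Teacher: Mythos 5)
Your proposal is correct and follows essentially the same route as the paper's own proof: unfold Definition~\ref{def:vdashltl}, apply Theorem~\ref{th:ltlh-soundness}, use the genericity of the label $b$ to obtain the pointwise statement over $[n]$, bridge from singleton sequences to arbitrary observation sequences via the locality of translated formulas (the paper cites Lemma~\ref{lem:last} where you cite its Corollary~\ref{cor:last}, an immaterial difference), and finish with Theorem~\ref{th:translation}. Your explicit discussion of the local-versus-global mismatch is a faithful elaboration of the step the paper compresses into ``since $b$ is generic.''
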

\begin{proof}
  By definition of $\vdashltl$, for a given label $b$, there exists a derivation in $\nltlh$ of $b:\tr{A}$ from open assumptions in $\Phi = \{ b:\tr{B} \,\mid \, B \in \Gammaltl \}$. By Theorem \ref{th:ltlh-soundness}, $\Phi \vdashltlh b:\tr{A}$ implies $\Phi \modelsltlh b:\tr{A}$. Since $b$ is generic, we have that for every $\ltl$-model $\m$ and every interpretation $\interp$, $\m,\interp \modelsltlh \Phi$ implies $\m,\interp \modelsltlh b:\tr{A}$ iff for every natural number $n$, $\m, [n] \modelsltlh \tr{\Gammaltl}$ implies $\m,[n] \modelsltlh \tr{A}$, where $\tr{\Gammaltl}=\{ \tr{B} \,\mid\, B\in\Gammaltl \}$. By Lemma \ref{lem:last}, we infer that 
for every observation sequence $\sigma$, $\m, \sigma \modelsltlh \tr{\Gammaltl}$ implies $\m,\sigma \modelsltlh \tr{A}$. By Definition \ref{def:ltlh-truth}, $\tr{\Gammaltl} \modelsltlh \tr{A}$ and thus, by Theorem \ref{th:translation}, we conclude $\Gammaltl \modelsltl A$.
%
\end{proof}


As we anticipated, an analysis of the completeness of $\nltlh$ with
respect to $\ltlh$ is left for future work. Here we discuss completeness
with respect to $\ltl$. The proposed natural deduction system consists
of only finitary rules; consequently, it cannot be strongly complete for
$\ltl$.\footnote{This is not a problem of our formulation: all the
finitary deduction systems for temporal logics equipped with at least
the operators $\x$ and $\g$ have such a defect; see, e.g.,
\cite[Ch.~6]{Kro:TempLogProgr:87:short}.} 
Nevertheless, by using the axiomatization $\axltl$ and the translation $\tr{(\cdot)}$, we can give a proof of weak completeness for it; namely:

\begin{theorem}\label{th:ltl-completeness}
For every $\ltl$-formula $A$,
if $\modelsltl A$, then $\vdashltl A$.
%
%
\end{theorem}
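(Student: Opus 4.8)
The plan is to route the argument through the Hilbert-style axiomatization $\axltl$. Since $\axltl$ is sound and \emph{complete} for $\ltl$, the hypothesis $\modelsltl A$ guarantees that $A$ is a theorem of $\axltl$. It therefore suffices to prove that every theorem of $\axltl$ is $\ltl$-derivable in $\nltlh$, i.e.\ that $\vdashltl A$ whenever $A$ is an $\axltl$-theorem. I would establish this by induction on the length of the $\axltl$-derivation of $A$: the base case shows that each axiom instance $\mathit{(A1)}$--$\mathit{(A8)}$ is $\vdashltl$-derivable, and the inductive step shows that the three inference rules $\mathit{(MP)}$, $\mathit{(Nec_X)}$ and $\mathit{(Nec_G)}$ preserve $\vdashltl$-derivability. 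Recall that, by Definition~\ref{def:vdashltl}, $\vdashltl A$ amounts to exhibiting a closed $\nltlh$-derivation of $b:\tr{A}$ for some label $b$, and by Lemma~\ref{lem:tr-ltl-is-ltllocal} every $\tr{A}$ is an $\ltllocal$-formula, so that the rule $\last$ is always applicable to such formulas.

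The inference rules are the easy part. For $\mathit{(MP)}$, given closed $\ltl$-derivations of $b:\tr{A}$ and of $b:\tr{A \limplies B}$ (which is $b:\tr{A}\limplies\tr{B}$) carrying the \emph{same} label $b$, a single $\limpliesE$ yields $b:\tr{B}$. For $\mathit{(Nec_X)}$ and $\mathit{(Nec_G)}$ I start from a closed proof of $b:\tr{A}$, rename it to a closed proof of $b_2:\tr{A}$ for a fresh label $b_2$ (legitimate since the proof has no open assumptions), apply $\last$ to pass to $b\,b_2:\tr{A}$ (permitted because $\tr{A}$ is $\ltllocal$), and then close with $\xI$, respectively $\gI$, discharging the \emph{vacuous} assumption $b \nextb b_2$, respectively $b \descb b_2$. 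This produces $b:\x\tr{A}=b:\tr{\x A}$, respectively $b:\g\tr{A}=b:\tr{\g A}$, and the relevant freshness side conditions hold vacuously since no open assumption mentions $b_2$.

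Most of the axioms are then routine labeled natural-deduction derivations. The tautology instances $\mathit{(A1)}$ follow from $\limpliesI$, $\limpliesE$ and $\bottomE$ together with the fact that $\tr{(\cdot)}$ commutes with $\limplies$ and $\bottom$. The $\K$-axioms $\mathit{(A2)}$ and $\mathit{(A4)}$ use the introduction/elimination pairs for $\g$/$\descb$ and $\x$/$\nextb$. Axiom $\mathit{(A3)}$ additionally invokes seriality and functionality of the successor via $\ser$ and $\lin$ (with $\bottomE$ used to relocate a derived contradiction across sequences, exactly as the unconstrained form of $\bottomE$ permits). Axiom $\mathit{(A5)}$ uses $\refldesc$ for the $\tr{A}$-conjunct and a combination of $\base$ and $\transdesc$ to pass from $\nextb$-steps to $\descb$ for the $\x\g\tr{A}$-conjunct. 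The induction axiom $\mathit{(A6)}$ is the most direct of all, being essentially a single application of the rule $\ind$.

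The main obstacle is the unfolding axiom $\mathit{(A7)}$, $A\until B \leftrightarrow (B \vee (A \wedge \x(A\until B)))$, because after translation the two sides become $\tr{B}\vee\f(\x\tr{B}\wedge\history\tr{A})$ and $\tr{B}\vee(\tr{A}\wedge\x(\tr{B}\vee\f(\x\tr{B}\wedge\history\tr{A})))$, and reconciling them requires genuine manipulation of the history operator. Here I expect to deploy $\historyI$ and $\historyE$ to, respectively, assemble and dismantle the interval witnessed by $\history$, shifting it one step with the help of $\base$, $\splitdesc$ and $\transdesc$ and coordinating it with the $\f$ (i.e.\ $\g$) and $\x$ rules; the delicate bookkeeping lies in the freshness conditions of $\historyI$ and in the case split of $\splitdesc$ separating ``$B$ holds now'' from ``$B$ holds strictly later''. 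The companion axiom $\mathit{(A8)}$, $A\until B \limplies \f B$, is comparatively light but still needs $\historyE$ to extract the eventual witness for $B$. Once $\mathit{(A7)}$ is in place, the inductive closure of the argument over $\axltl$-proofs is immediate, giving $\vdashltl A$ and hence the theorem.
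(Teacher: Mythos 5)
Your proposal follows essentially the same route as the paper: reduce weak completeness to completeness with respect to the Hilbert axiomatization $\axltl$ (which is sound and complete for $\ltl$), then show that the translation of each axiom is $\ltl$-derivable in $\nltlh$ and that $\vdashltl$ is closed under $\mathit{(MP)}$, $\mathit{(Nec_X)}$ and $\mathit{(Nec_G)}$ --- exactly the decomposition into (i) and (ii) that the paper states, with the rules dismissed as straightforward and the axiom derivations (e.g.\ $\mathit{(A6)}$ via $\ind$) exhibited in the appendix. Your treatment of the necessitation rules via renaming, $\last$ and vacuous discharge, and your sketches of the axiom cases, match what the paper actually does; the only quibble is that the paper's derivation of $\mathit{(A8)}$ extracts the witness for $\f\tr{B}$ from the $\x\tr{B}$ conjunct using $\fE$, $\wedgeE$ and $\xE$ (plus $\base$ and $\transdesc$), not $\historyE$.
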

\begin{proof}
We can prove the theorem by showing that $\nltlh$
is complete with respect to the axiomatization $\axltl$ given in
\secref{sec:ltl}, which is sound and complete for the logic
$\ltl$. That is, we need to prove that:
(i) the translation, via $\tr{(\cdot)}$, of every axiom of $\axltl$
is provable in $\nltlh$ by means of an $\ltl$-derivation, and (ii) the notion of $\vdashltl$ is closed under the
(labeled equivalent of the) rules of inference of $\axltl$. 
Showing (ii) is straightforward and we omit it here. As an example for (i), 
we give here a derivation of the translation of $(A6)$.
%
\begin{SHORT}
The other cases are presented in \cite{}.
\end{SHORT}
\begin{LONG}
The other cases are presented in Appendix \ref{ap:ltl-completeness}.
\end{LONG}

  \begin{displaymath}
\vcenter{
\scriptsize
\infer[\limpliesI^1]{ b: \g( A \limplies \x  A) \limplies ( A \limplies \g  A)}
 {
 \infer[\limpliesI^2]{ b:  A \limplies \g  A}
  {
  \infer[\limplies^3]{ b: \g  A}
   {
   \infer[\last]{ bc:  A}
   {
   \infer[\ind^4]{ c:  A}
    {
    [ b:  A]^2
    &
    [b\descb c]^3
    &
    \infer[\last]{ b_j: A}
     {
    \infer[\xE]{ b b_i b_j: A}
     {
     \infer[\limpliesE]{ b b_i: \x  A}
      {
      \infer[\gE]{ b b_i:  A \limplies \x  A}
       {
       [ b:\g ( A \limplies \x  A)]^1
       &
       [b \descb b_i]^4
       }
      &
      \infer[\last]{ b b_i: A}{[ b_i: A]^4}
      }
     &
     [b_i \nextb b_j]^4
     }
    }
   }
  }
}
 }
}
}
\end{displaymath}
\end{proof}

\section{Conclusions}
\label{sec:discussion}

The introduction of the operator $\history$ has allowed us to formalize
the ``history'' of until and thus, via a proper translation, to give a
labeled natural deduction system for a linear time logic endowed with
$\history$ 
that is also sound and complete with respect to $\ltl$ with until. As we
remarked above, we see this work as spawning several different
directions for future research. First, the ``recipe'' for dealing with
until that we gave here is abstract and general, and thus provides the
basis for formalizing deduction systems for temporal logics endowed with
$\until$, both linear and branching time. We are
currently considering $\ctlstar$ and its sublogics as
in~\cite{MasiniViganoVolpe09a,Rey:TableauBundled:07} and are also
working at a formal characterization of the class of logics that can be
captured with our approach.

Second, the well-behaved nature of our approach,
where
each connective and operator has one introduction and one elimination
rule, paves the way to a proof-theoretical analysis of the resulting
natural deduction systems, e.g.,~to show proof normalization and other
useful meta-theoretical properties, which we are tackling in current work.
Moreover, we are also considering different optimizations of the rules.
In particular, along the lines of the discussion about the rule $\last$
(and Corollary~\ref{cor:last} and Definition~\ref{def:vdashltl}), we are
investigating to what extent we can use sequences as labels only when
they are really needed, which would also simplify the proofs of
normalization and other meta-properties\footnote{As an interesting
side-track, we believe that the restrictions we imposed on formulas for
the rule $\last$, i.e.,~considering $\local{A}$ and $\hist{A}$, is
closely related, at least in spirit, to the focus on \emph{persistent}
formulas when combining intuitionistic and classical logic so as to
avoid the collapse of the two logics into one,
see~\cite{FarinasHerzig96short} but
also~\cite{CaleiroRamos07,GabbayFrocos96}. We are, after all,
considering here formulas stemming from two classes (if not two logics
altogether), and it makes thus sense that they require different
labeling (single instants and sequences).}.

This is closely related to the formalization of the relationship between
the class of $\ltllocal$-formulas and that of $\ltl$-formulas, which in
turn will allow us to reason about the completeness of $\nltlh$ with
respect to the semantics of $\ltlh$ and also to provide an
axiomatization of $\ltlh$ (thus treating it as a full-fledged logic as
opposed to as a ``service'' logic for $\ltl$ as we did here).

Finally, it is worth observing that several works have considered
\emph{interval temporal logics},
e.g.,~\cite{BowmanThompson03,CerritoCialdeaMayer00,Goranko06,HalpernShoham91,SGL:PLTL1}.
While these works consider intervals explicitly, we have used them
somehow implicitly here, as a means to formalize the dual nature of
until via the history $\history$, and this is another reason
why it is interesting to reduce the use of label sequences as much as
possible.
A more detailed comparison of our approach with these works is left for
future work.



\medskip
\noindent
{\bf Acknowledgments}
This work was partially supported by the PRIN projects ``CONCERTO'' and ``SOFT''.


\begin{LONG}
\appendix
\newpage

\section{Proofs}

\subsection{Properties of the translation $\tr{(\cdot)}$}
    \label{ap:translation}

\paragraph{Proof of Lemma \ref{lem:last}}
	By induction on the complexity of $A$. The base case is when $A= p$ or $A = \bottom$ and is trivial. There is one inductive step case for each connective and temporal operator.
	\begin{description}
		\item[$A = B \limplies C$.]
			Then the translation of $A$ is $\tr{A} = \tr{B} \limplies \tr{C}$. By Definition \ref{def:ltlh-truth}, we obtain $\m, [n_1, \ldots, n_k] \modelsltlh \tr{B} \limplies \tr{C}$ iff $\m, [n_1, \ldots, n_k] \modelsltlh \tr{B} \,$ implies  $\, \m, [n_1, \ldots, n_k] \modelsltlh \tr{C}.$ 
			By the induction hypothesis, we see that this holds iff $\m, [m_1, \ldots, m_r, n_k] \modelsltlh \tr{B}$ implies $\m, [m_1, \ldots, m_r, n_k] \modelsltlh \tr{C}$ for every sequence $m_1, \ldots, m_r$ and thus, by Definition \ref{def:ltlh-truth}, iff for every sequence $m_1, \ldots, m_r$, $\; \m, [m_1, \ldots, m_r, n_k] \modelsltlh \tr{B} \limplies \tr{C}$.
		\item[$A = \g B$.]
			Then $\tr{A} = \g \tr{B}$. By Definition \ref{def:ltlh-truth}, $\m, [n_1, \ldots, n_k] \modelsltlh \g \tr{B}$ iff $\forall m \ge n_k. \, \m, [n_1, \ldots, n_k, m] \modelsltlh \tr{B}$ iff (by the induction hypothesis) $\forall m \ge n_k.\, \m,$ $[m_1, \ldots, m_r, n_k, m] \modelsltlh \tr{B}$ for every sequence $m_1, \ldots, m_r$ iff (by Definition \ref{def:ltlh-truth}) $\m, [m_1, \ldots, m_r, n_k] \modelsltlh \g \tr{B}$, for every sequence $m_1, \ldots, m_r$.
		\item[$A = \x B$.]
			This case is very similar to the previous one and we omit it.
		\item[$A = B \until C$.]
			Then $\tr{A} = \tr{C} \vee (\f(\x \tr{C} \wedge \history \tr{B}))$. By Definition \ref{def:ltlh-truth}, we have $\m,$ $[n_1,$ $\ldots,$ $n_k] \modelsltlh \tr{A}$ iff $(\m, [n_1, \ldots, n_k] \modelsltlh \tr{C}$ or $\m, [n_1, \ldots, n_k]$ $\modelsltlh \f(\x\tr{C} \wedge \history \tr{B}))$ iff $(\m, [n_1, \ldots, n_k]$ $\modelsltlh \tr{C}$ or $\exists m \ge n_k. \, (\m, [n_1, \ldots, n_k, m]$ $\modelsltlh \x\tr{C} \wedge \history \tr{B}))$ iff $(\m, [n_1, \ldots, n_k]$ $\modelsltlh \tr{C}$ or $\exists m \ge n_k. \, (\m, [n_1, \ldots, n_k, m]$ $\modelsltlh \x\tr{C}$ and $\m, [n_1, \ldots, n_k, m] \modelsltlh \history \tr{B}))$ iff $(\m, [n_1, \ldots, n_k] \modelsltlh \tr{C}$ or $\exists m \ge n_k. \, (\m, [n_1, \ldots, n_k, m, m+1] \modelsltlh \tr{C}$ and $\forall l.\, n_k \le l \le m$ implies $\m, [n_1, \ldots,$ $ n_k, l] \modelsltlh \tr{B}))$ iff (by the induction hypothesis) for every sequence $m_1, \ldots, m_r$, we have $(\m, [m_1, \ldots, m_r, n_k] \modelsltlh \tr{C}$ or $\exists m \ge n_k. \, (\m, [m_1, \ldots, m_r, n_k, m, m+1] \modelsltlh \tr{C}$ and $\forall l.\, n_k \le l \le m$ implies $\m, [m_1, \ldots, m_r, n_k, m, l] \modelsltlh \tr{B}))$ iff (by Definition \ref{def:ltlh-truth}) $\m, [m_1, \ldots, m_r, n_k]$ $\modelsltlh \tr{C} \vee (\f(\x \tr{C} \wedge \history \tr{B}))$ for every sequence $m_1, \ldots, m_r$.			
	\end{description}
\qed
%
\paragraph{Proof of Corollary \ref{cor:last}}
	Immediate, by Lemma \ref{lem:last}.
\qed

\paragraph{Proof of Lemma \ref{lem:translation}}
	By induction on the complexity of $A$. The base case is when $A= p$ or $A = \bottom$ and is trivial. As inductive step, we have a case for each connective and temporal operator.
	\begin{description}
		\item[$A = B \limplies C$.]
			Then $\tr{A} = \tr{B} \limplies \tr{C}$. We have $\m, n \modelsltl B \limplies C$ iff (by Definition \ref{def:ltl-truth}) $\m, n \modelsltl B$ implies $\m, n \modelsltl C \;$ iff (by the induction hypothesis) $\m, [n] \modelsltlh \tr{B}$ implies $\m, [n] \modelsltlh \tr{C}$ iff (by Definition \ref{def:ltlh-truth}) $\m,[n] \modelsltlh \tr{B} \limplies \tr{C}$.
		\item[$A = \g B$.]
			Then $\tr{A} = \g \tr{B}$. We have $\m, n \modelsltl \g B$ iff (by Definition \ref{def:ltl-truth}) $\forall m \ge n. \, \m, m \modelsltl B$ iff (by the induction hypothesis) $\forall m \ge n.\, \m,[m] \modelsltlh \tr{B}$ iff (by Lemma \ref{lem:last}) $\forall m \ge n. \, \m,[n,m] \modelsltlh \tr{B}$ iff (by Definition \ref{def:ltlh-truth}) $\m,[n] \modelsltlh \g \tr{B}$.
		\item[$A = \x B$.]
			This case is very similar to the previous one and we omit it.			
		\item[$A = B \until C$.]
			Then $\tr{A} = \tr{C} \vee (\f(\x \tr{C} \wedge \history \tr{B}))$. We have $\m, n \modelsltl A$ iff (by Definition \ref{def:ltl-truth}) $\exists m \ge n. \, \m, m \modelsltl C$ and $\forall n'.\, n \le n' < m$ implies $\m,n' \modelsltl B$ iff $\m,n \modelsltl C$ or $(\exists m > n. \, \m, m \modelsltl C$ and $\forall n'.\,n \le n' < m$ implies $\m,n' \modelsltl B)$ iff (by the induction hypothesis) $\m,[n] \modelsltlh \tr{C}$ or $(\exists m > n. \, \m, [m] \modelsltlh \tr{C}$ and $\forall n'.\,n \le n' < m$ implies $\m,[n'] \modelsltlh \tr{B})$ iff (by Lemma \ref{lem:last}) $\m,[n] \modelsltlh \tr{C}$ or $(\exists m > n. \, \m, [n, m] \modelsltlh \tr{C}$ and $\forall n'.\,n \le n' < m$ implies $\m,[n,n'] \modelsltlh \tr{B})$ iff $\m,[n] \modelsltlh \tr{C}$ or $(\exists l \ge n. \, \m, [n, l, l+1] \modelsltlh \tr{C}$ and $\forall n'.\,n \le n' \le l$ implies $\m,[n,n'] \modelsltlh \tr{B})$ iff (by Definition \ref{def:ltlh-truth}) $\m,[n]\modelsltlh \tr{C}$ or $(\exists l\ge n.\,$ $\m,[n,l] \modelsltlh \x \tr{C} \wedge \history \tr{B})$ iff (by Definition~\ref{def:ltlh-truth}) $\m,[n]\modelsltlh \tr{C} \vee \, \f (\x \tr{C} \wedge \history \tr{B}).$
	\end{description}	
\qed

\subsection{The system $\nltlh$}
  \label{ap:nltlh}

  \paragraph{Proof of Lemma \ref{lem:last-Al}}
The proofs of the statements $(i)$ and $(ii)$ proceed in parallel and are by induction on the formula complexity. The base case is when $\local{A}= p$ or $\local{A} = \bottom$ and is trivial. There is one inductive step case for each other formation case coming from the recursive definition of the grammar \eqref{eq:local-language}. Along the proof, $\local{A}, \local{B}, \local{C},\ldots$ denote $\ltllocal$-formulas while $\hist{A}, \hist{B}, \hist{C},\ldots$ denote $\ltlhistory$-formulas.
	\begin{description}
		\item[$\local{A} = \local{B} \limplies \local{C}$.]
			By Definition \ref{def:ltlh-truth}, we have $\m, [n_1, \ldots, n_k] \modelsltlh \local{B} \limplies \local{C}\;$ iff $\;\m,$ $[n_1,$ $\ldots, n_k] \modelsltlh \local{B} \,$ implies  $\, \m, [n_1, \ldots, n_k] \modelsltlh \local{C}.$ 
			By the induction hypothesis, we see that this holds iff $\;\m, [m_1, \ldots, m_r,$ $n_k] \modelsltlh \local{B} \, \mbox{ implies } \, \m, [m_1, \ldots,$ $m_r,$ $n_k] \modelsltlh \local{C}\;$ for every sequence $m_1, \ldots, m_r$ and thus, by Definition \ref{def:ltlh-truth}, iff for every sequence $m_1, \ldots, m_r$, $\; \m,$ $[m_1, \ldots, m_r,$ $n_k]$ $\modelsltlh \local{B} \limplies \local{C}\;$.
		\item[$\local{A} = \g \hist{B}$.]
			$\m, [n_1, \ldots, n_k] \modelsltlh \g \hist{B}\;$ iff (by Definition \ref{def:ltlh-truth}) $\;\forall m \ge n_k. \, \m,$ $[n_1, \ldots, n_k, m] \modelsltlh \hist{B} \;$ iff (by the induction hypothesis) $\; \forall m \ge n_k.\, \m,$ $[m_1, \ldots, m_r, n_k, m] \modelsltlh \hist{B} \,$ for every sequence $m_1, \ldots, m_r \;$ iff (by Definition \ref{def:ltlh-truth}) $\; \m, [m_1, \ldots, m_r, n_k] \modelsltlh \g \hist{B} \,$ for every sequence $m_1, \ldots, m_r$.
		\item[$\local{A} = \x \hist{B}$.]
			This case is very similar to the previous one and we omit it.
		\item[$\hist{A} = \local{B}$.]
			$\m, [n_1, \ldots, n_k] \modelsltlh \local{B} \;$ iff (by the induction hypothesis) $\; \m,$ $[i_1, \ldots,$ $i_s,n_k] \modelsltlh \local{B}\;$ for every sequence $i_1 \ldots, i_s$ and thus also $\; \m,$ $[m_1, \ldots,$ $m_r, n_{k-1}, n_k] \modelsltlh \local{B}\;$ for every sequence $m_1, \ldots, m_r$.
                \item[$\hist{A} = \hist{B} \limplies \hist{C}$.]
			 $\m, [n_1, \ldots, n_k] \modelsltlh \hist{B} \limplies \hist{C}\;$ iff (by Definition~\ref{def:ltlh-truth}) $\;\m, [n_1,$ $\ldots, n_k] \modelsltlh \hist{B} \,$ implies  $\, \m, [n_1, \ldots, n_k] \modelsltlh \hist{C}.$ 
			By the induction hypothesis, this holds iff $\;\m, [m_1, \ldots, m_r, n_{k-1}, n_k] \modelsltlh \hist{B} \, \mbox{ implies } \, \m,$ $[m_1,$ $\ldots,$ $m_r, n_{k-1}, n_k] \modelsltlh \hist{C}\;$ for every sequence $m_1, \ldots, m_r$ and thus, by Definition \ref{def:ltlh-truth}, iff for every sequence $m_1, \ldots, m_r$, $\; \m, [m_1, \ldots, m_r, n_{k-1}, n_k]$ $\modelsltlh \hist{B} \limplies \hist{C}\;$.
                \item[$\hist{A} = \history \hist{B}$.]
			$\m, [n_1, \ldots, n_k] \modelsltlh \history \hist{B}\;$ iff (by Definition \ref{def:ltlh-truth}) $\;\forall n.\, n_{k-1}\le n \le n_k \,\mbox{ implies }  \, \m, [n_1, \ldots, n_{k-1}, n] \modelsltlh \hist{B} \;$ iff (by the induction hypothesis) $\;\forall n.\, n_{k-1}\le n \le n_k \,\mbox{ implies }  \, \m, [m_1, \ldots, m_r, n_{k-1}, n] \modelsltlh \hist{B} \;$ for every sequence $m_1, \ldots, m_r \;$ iff (by Definition \ref{def:ltlh-truth}) $\; \m, [m_1, \ldots, m_r, n_{k-1}, n_k] \modelsltlh \history \hist{B} \,$ for every sequence $m_1, \ldots, m_r$.
	\end{description}
\qed

\subsection{Soundness}
  \label{ap:ltlh-soundness}

  \paragraph{Proof of Theorem \ref{th:ltlh-soundness}}

  We present here some more cases related to the proof of Theorem \ref{th:ltlh-soundness}, which states the soundness of the system $\nltlh$ with respect to the semantics of $\ltlh$.

  Consider the case in which the last rule application is a $\gI$, where $\alpha = \beta b_1$ and $A=\g B$:
  \begin{displaymath}
  \infer[\gI]{\beta b_1:\g B}
  {
  \deduce{\beta b_1 b_2:B}{\deduce{\Pi}{[b_1 \descb b_2]}}
  }
  \end{displaymath}
  where $\Pi$ is a proof of $\beta b_1:\g B$ from hypotheses in $\Phi'$, with $b_2$ fresh and with $\Phi' = \Phi \cup \{b_1 \descb b_2\}$. By the induction hypothesis, for all interpretations $\interp$, if $\m, \interp \modelsltlh \Phi'$, then $\m, \interp \modelsltlh  \beta b_1 b_2:B$. We let $\interp$ be any interpretation such that $\m, \interp \modelsltlh \Phi$, and show that $\m, \interp \modelsltlh \beta b_1:\g B$. Let $\interp(b_1) = n$ and $\interp^+(\beta) = [n_1, \ldots, n_k]$. Since $b_2$ is fresh, we can extend $\interp$ to an interpretation (still called $\interp$ for simplicity) such that $\interp(b_2)=n+m$ for an arbitrary $m>0$. The induction hypothesis yields $\m, \interp \modelsltlh \beta b_1 b_2:B$, i.e.,~$\m , [n_1, \ldots, n_k, n, n+m]\modelsltlh B$, and thus, since $m$ is arbitrary, we obtain $\m, [n_1, \ldots, n_k, n] \modelsltlh \g B$. It follows $\m, \interp \modelsltlh \beta b_1: \g B$.
  
  Now consider the case in which the last rule applied is $\gE$ and $\alpha = \beta b_1 b_2$:
  \begin{displaymath}
  \infer[\gE]{\beta b_1 b_2: A}
  {
  \deduce{\beta b_1: \g A}{\Pi}
  &
  b_1 \descb b_2
  }
  \end{displaymath}
  where $\Pi$ is a proof of $\beta b_1:\g A$ from hypotheses in $\Phi_1$, with $\Phi = \Phi_1 \cup \{ b_1 \descb b_2 \}$ for some set $\Phi_1$ of formulas. By applying the induction hypothesis on $\Pi$, we have:
  \begin{displaymath}
          \Phi_1 \modelsltlh \beta b_1:\g A \; .
  \end{displaymath}
  We proceed by considering a generic $\ltl$-model $\m$ and a generic interpretation $\interp$ on it such that $\m, \interp \modelsltlh \Phi$ and showing that this entails 
  \begin{displaymath}
          \m, \interp \modelsltlh \beta b_1 b_2:A \; .
  \end{displaymath}
  Since $\Phi_1 \subset \Phi$, we deduce $\m, \interp \modelsltlh \Phi_1$ and, from the induction hypothesis, $\m, \interp \modelsltlh \beta b_1: \g A$. Furthermore $\m, \interp \modelsltlh \Phi$ entails $\m, \interp \modelsltlh b_1 \descb b_2$. Then, by Definition \ref{def:ltlh-truth}, we obtain $\m, \interp \modelsltlh \beta b_1 b_2:A$.

  Now consider the case in which the last rule applied is $\historyE$ and $\alpha = \beta b_1 b_2$:
  \begin{displaymath}
  \infer[\historyE]{\beta b_1 b_2: A}
  {
  \deduce{\beta b_1 b_3: \history A}{\Pi}
  &
  b_1 \descb b_2
  &
  b_2 \descb b_3
  }
  \end{displaymath}
  where $\Pi$ is a proof of $\beta b_1 b_3:\history A$ from hypotheses in $\Phi_1$, with $\Phi = \Phi_1 \cup \{ b_1 \descb b_2 \} \cup \{ b_2 \descb b_3 \}$ for some set $\Phi_1$ of formulas. By applying the induction hypothesis on $\Pi$, we have:
  \begin{displaymath}
          \Phi_1 \modelsltlh \beta b_1 b_3:\history A \; .
  \end{displaymath}
  We proceed by considering a generic $\ltl$-model $\m$ and a generic interpretation $\interp$ on it such that $\m, \interp \modelsltlh \Phi$ and showing that this entails 
  \begin{displaymath}
          \m, \interp \modelsltlh \beta b_1 b_2:A \; .
  \end{displaymath}
  Since $\Phi_1 \subset \Phi$, we deduce $\m, \interp \modelsltlh \Phi_1$ and, from the induction hypothesis, $\m, \interp \modelsltlh \beta b_1 b_3: \history A$. Furthermore $\m, \interp \modelsltlh \Phi$ entails $\m, \interp \modelsltlh b_1 \descb b_2$ and $\m, \interp \modelsltlh b_2 \descb b_3$. Then, by Definition \ref{def:ltlh-truth}, we obtain $\m, \interp \modelsltlh \beta b_1 b_2:A$.

  Finally, consider the case in which the last rule applied is $\ind$ and $\alpha = \beta b$:
  \begin{displaymath}
  \infer[\ind]{\beta b:A}
  {
  \deduce{\beta b_0: A}{\Pi'}
  &
  b_0 \descb b
  &
  \deduce{\beta b_j:A}{\deduce{\Pi}{[b_0 \descb b_i] & [b_i \nextb b_j] & [\beta b_i: A]}}
  }
  \end{displaymath}
  where $\Pi$ is a proof of $\beta b_j:A$ from hypotheses in $\Phi_2$ and $\Pi'$ is a proof of $\beta b_0:A$ from hypotheses in $\Phi_1$, with $\Phi = \Phi_1 \cup \{ b_0 \descb b \}$ and $\Phi_2 = \Phi_1 \cup \{ b_0 \descb b_i \} \cup \{ b_i \nextb b_j \} \cup \{ \beta b_i:A \}$ for some set $\Phi_1$ of formulas. The side-condition on $\ind$ ensures that $b_i$ and $b_j$ are fresh in $\Pi$. Hence, by applying the induction hypothesis on $\Pi$ and $\Pi'$, we have:
  \begin{displaymath}
  \Phi_2 \modelsltlh \beta b_j:A \qquad \text{and} \qquad \Phi_1 \modelsltlh \beta b_0:A\, .
  \end{displaymath}
  We proceed by considering a generic $\ltl$-model $\m$ and a generic interpretation $\interp$ on it such that $\m, \interp \modelsltlh \Phi$ and showing that this entails 
  \begin{displaymath}
  \m, \interp \modelsltlh \beta b:A \; .
  \end{displaymath}
  First, we note that $\Phi_1 \subset \Phi$ and therefore $\m, \interp \modelsltlh \Phi$ implies $\m, \interp \modelsltlh \Phi_1$ and, by the induction hypothesis on $\Pi'$, $\m, \interp \modelsltlh \beta b_0:A$.
  Now let $\interp(b_0) = n$ for some natural number $n$. From $\m, \interp \modelsltlh \Phi$, we deduce $\m, \interp \modelsltlh b_0 \descb b$ and thus $\interp(b)=n+k$ for some $k\in \nat$. We show by induction on $k$ that $\m, \interp \modelsltlh \beta b:A$.
  As a base case, we have $k=0$; it follows that $\interp(b) = \interp(b_0)$ and thus trivially that $\m, \interp \modelsltlh \beta b_0:A$ entails $\m, \interp \modelsltlh \beta b:A$.
  Let us consider now the induction step. Given a label $b_{k-1}$ such that $\interp(b_{k-1})=n+k-1$, we show that the induction hypothesis $\m, \interp \modelsltlh \beta b_{k-1}:A$ entails the thesis $\m, \interp \modelsltlh \beta b:A$. We can build an interpretation $\interp'$ that differs from $\interp$ only in the points assigned to $b_i$ and $b_j$, namely, $\interp' = \interp[b_i \mapsto n+k-1][b_j \mapsto n+k]$. It is easy to verify that the interpretation $\interp'$ is such that the following three conditions hold:
  \begin{enumerate}[$(i)$]
  \item $\m, \interp' \modelsltlh \beta b_i:A$;
  \item $\m, \interp' \modelsltlh b_0 \descb b_i$;
  \item $\m, \interp' \modelsltlh b_i \nextb b_j$.
  \end{enumerate}
  Furthermore, the side-condition on the rule $\ind$ ensures that
  $\interp$ and $\interp'$ agree on all the labels occurring in
  $\Phi_1$, from which we can infer $\m, \interp' \modelsltlh 
  \Phi_1$. It follows $\m, \interp' \modelsltlh
  \Phi_2$ and thus, by the induction hypothesis on $\Pi$, $\m, \interp' \modelsltlh \beta b_j: A$.
  We conclude $\m, \interp' \modelsltlh \beta b:A$ by observing that $\interp'(b_j) = \interp(b)$.
\qed

\subsection{Completeness}
  \label{ap:ltl-completeness}

  \paragraph{Proof of Theorem \ref{th:ltl-completeness}}

We present here the $\nltlh$-derivations of the remaining axioms of $\axltl$. Note that, for simplicity, we use also some rules (i.e.,~$\fI$, $\fE$, $\veeI$, $\veeE$, $\wedgeI$ and $\wedgeE$) concerning derived operators. They can be easily derived from the set of rules in Figure~\ref{fig:rules}.

\paragraph{$\mathit{(A2)}$}
\begin{displaymath}
\scriptsize
\infer[\limpliesI^1]{ b:\g ( A \limplies  B) \limplies (\g  A \limplies \g  B)}
 {
 \infer[\limpliesI^2]{ b: \g  A \limplies \g  B}
  {
  \infer[\gI^3]{ b: \g  B} 
   {
   \infer[\limpliesE]{ bc: B}
    {
    \infer[\gE]{ bc: A \limplies  B}
     {
     [ b:\g( A \limplies  B)]^1
     & 
     [b\descb c]^3
     }
    &
    \infer[\gE]{ bc:  A}
     {
     [ b: \g  A]^2
     &
     [b \descb c]^3
     }
    }
   }
  }
 }
\end{displaymath}

\paragraph{$\mathit{(A3)}$}
 $(\x \neg  A \leftrightarrow \neg \x  A)$
\begin{displaymath}
\scriptsize
\begin{array}{c}
\infer[\limpliesI^1]{ b:\x \neg A \limplies \neg \x A}
 {
 	\infer[\ser^2]{ b:\neg \x A}{
 		\infer[\bottomE^3]{ b:\neg \x A}{
 			\infer[\limpliesE]{ bc:\bottom}{
 				\infer[\xE]{ bc:\neg A}{[ b:\x\neg A]^1 & [b \nextb c]^2}	
 					& 
 				\infer[\xE]{ bc:A}{[ b:\x A]^3 & [b \nextb c]^2}
 				}
 			}
 		}
 }
\\ \ \\
\infer[\limpliesI^1]{ b:\neg \x  A \limplies \x \neg  A}
 {
 \infer[\xI^2]{ b: \x \neg  A}
  {
  \infer[\limplies^3]{ bc:\neg  A}
   {
   \infer[\limpliesE]{ b:\bottom}
    {
    [ b:\neg \x  A]^1
    &
    \infer[\xI^4]{ b:\x  A}
     {
     \infer[\lin]{ bd:  A}
      {
      [b \nextb c]^2
      & 
      [b \nextb d]^4
      &
      [ bc: A]^3
      }
     }
    }
   }
  }
 }
\end{array}
\end{displaymath}

\paragraph{$\mathit{(A4)}$} This proof is similar to the one for
$\mathit{(A2)}$ and we thus omit it.

\paragraph{$\mathit{(A5)}$}
\begin{displaymath}
\scriptsize
\vcenter{
\infer[\limpliesI^1]{ b:\g A \limplies (A \wedge \x \g A)}
 {
 \infer[\wedge I]{ b:A \wedge \x \g A}
  {
  \infer[\refldesc^2]{ b:A}
   {
   \infer[\last]{ b:A}
   {
   \infer[\gE]{ b b:A}
    {
    [ b: \g A]^1
    &
    [b\descb b]^2
    }
   }
   }
  &
  \infer[\xI^3]{ b:\x \g A}
   {
   \infer[\gI^4]{ bc:\g A}
    {
    \infer[\last]{ bcd: A}
    {
    \infer[\basedesc^5]{ bd:A}
     {
     [b \nextb c]^3
     &
     \infer[\transdesc^6]{ bd:A}
      {
      [b \descb c]^5
      &
      [c \descb d]^4
      &
      \infer[\gE]{ bd:A}
       {
       [ b: \g A]^1
       & 
       [b \descb d]^6
       }
      }
     }
    }
   }
  }
 }}
}
\end{displaymath}

\paragraph{$\mathit{(A7)}$}
Note that, for brevity, we give here a derivation of a, clearly equivalent, simplified version of the translation of $\mathit{(A7)}$. Namely, we consider $\f(\x B \wedge \history A) \limplies (A \wedge \x(B \vee \f(\x B \wedge \history A)))$ instead of $B \vee \f(\x B \wedge \history A) \limplies B \vee (A \wedge \x(B \vee \f(\x B \wedge \history A)))$.

Left-to-right direction:
  \begin{displaymath}
\vcenter{
\scriptsize
\infer[\limpliesI^1]{ b: \f(\x B \wedge \history A) \limplies (A \wedge \x(B \vee \f(\x B \wedge \history A)))}{\infer[\fE^2]{ b:A \wedge \x(B \vee \f(\x B \wedge \history A))}{[ b:\f(\x B \wedge \history A)]^1 & \infer[\wedgeI]{ b:A \wedge \x(B \vee \f(\x B \wedge \history A))}{\infer[\last]{ b:A}{\infer[\refldesc^3]{ bb:A}{\infer[\historyE]{ bb:A}{\infer[\wedgeE]{ bc:\history A}{[ bc:\x B \wedge \history A]^2} & [b\descb b]^3 & [b \descb c]^2}}} & \deduce{ b:\x(B \vee \f(\x B \wedge \history A))}{\Pi_1}}}}
}
\end{displaymath}
where $\Pi_1$ is the following derivation:
  \begin{displaymath}
	\hspace*{-1.8cm}
\vcenter{
\scriptsize
\infer[\xI^4]{ b:\x(B \vee \f(\x B \wedge \history A))}
	{\infer[\splitdesc^5]{ bb':B\vee \f(\x B \wedge \history A)}
	{[b\descb c]^2 & [b\nextb b']^4 & \infer[\veeI]{ bb':B \vee \f(\x B \wedge \history A)}{\infer[\last]{ bb':B}{\infer[\xE]{ bcb':B}{\infer[\wedgeE]{ bc:\x B}{[ bc:\x B \wedge \history A]^2} & [c \descb b']^5}}} & \infer[\lin^6]{ bb':B \vee \f(\x B \wedge \history A)}{b\nextb b' & b \nextb b'' & \deduce{ bb'':B \vee \f(\x B \wedge \history A)}{\Pi_2} & [ bb':B \vee \f(\x B \wedge \history A)]^6}}
	}
}
\end{displaymath}
and $\Pi_2$ is the following derivation:
\begin{displaymath}
\hspace*{-2cm}
\vcenter{
\scriptsize
\infer[\veeI]{ bb'':B \vee \f(\x B \wedge \history A)}{\infer[\fI]{ bb'':\f(\x B \wedge \history A)}{\infer[\wedgeI]{ bb''c:\x B \wedge \history A}{\infer[\xI^7]{ bb''c:\x B}{\infer[\last]{ bb''cc':B}{\infer[\xE]{ bcc':B}{\infer[\wedgeE]{ bc:\x B}{[ bc:\x B \wedge \history A]^2} & [c \nextb c']^7}}} & \infer[\historyI^8]{ bb''c:\history A}{\infer[\last]{ bb''d:A}{\infer[\base^9]{ bd:A}{[b\nextb b'']^5 & \infer[\transdesc^{10}]{ bd:A}{[b\descb b'']^9 & [b'' \descb d]^8 & \infer[\historyE]{ bd:A}{\infer[\wedgeE]{ bc:\history A}{[ bc:\x B \wedge \history A]^2} & [b\descb d]^{10} & [d\descb c]^8}}}}}} & [b''\descb c]^5}}
}
\end{displaymath}

Right-to-left direction:
in the following derivations, we denote with $\varphi$ the formula $ b:A \wedge \x(B \vee \f(\x B \wedge \history A))$.

\begin{displaymath}
\vcenter{
\scriptsize
\infer[\limpliesI^1]{ b: (A \wedge \x (B \vee \f(\x B \wedge \history A))) \limplies \f(\x B \wedge \history A)}{\infer[\ser^2]{ b:\f (\x B \wedge \history A)}{\infer[\veeE^3]{ b:\f(\x B \wedge \history A)}{\infer[\xE]{ b e:B \vee \f(\x B \wedge \history A)}{\infer[\wedgeE]{ b:\x(B \vee \f(\x B \wedge \history A))}{[\varphi]^1} & [b \nextb e]^2} & \deduce{ b:\f(\x B \wedge \history A)}{\deduce{\Pi_1}{[ b e: B]^3}} & \deduce{ b:\f(\x B \wedge \history A)}{\deduce{\Pi_2}{[ b e:\f(\x B \wedge \history A)]^3}}}}}
}
\end{displaymath}
where $\Pi_1$ is the following derivation:
  \begin{displaymath}
\vcenter{
\scriptsize
\infer[\refldesc^4]{ b:\f(\x B \wedge \history A)}{\infer[\fI]{ b:\f (\x B \wedge \history A)}{\infer[\wedgeI]{ b b:\x B \wedge \history A}{\infer[\xI^5]{ b b:\x B}{\infer[\last]{ b b f: B}{\infer[\lin^6]{ b f:B}{[b\nextb e]^2 & [b\nextb f]^5 & [ b e:B]^3 & [ b f:B]^6}}} & \infer[\historyI^7]{ b b:\history A}{\infer[\last]{ b b':A}{\infer[\eqdesc]{ b':A}{[b\descb b']^7 & [b'\descb b]^7 & \infer[\wedgeE]{ b:A}{[\varphi]^1}}}}} & [b\descb b]^4}}
}
\end{displaymath}
$\Pi_2$ is the following derivation:
\begin{displaymath}
\hspace*{-1.7cm}
\vcenter{
\scriptsize
  \infer[\fE^8]
  { b:\f(\x B \wedge \history A)  }
  {[ b e:\f(\x B \wedge \history A)]^3 & 
    \infer[\base^9]
    { b:\f(\x B \wedge \history A)}
    {[b \nextb e]^2 & 
      \infer[\transdesc^{10}]
      { b:\f(\x B \wedge \history A)}
      {[b\descb e]^9 & [e\descb c]^8 & 
        \infer[\fI]
        { b:\f(\x B \wedge \history A)}
        {\infer[\wedgeI]
         { b c:\x B \wedge \history A}
         {\infer[\xI^{11}]{ b c:\x B}{\infer[\last]{ b c f:B}{\infer[\xE]{ b e c f:B}{\infer[\wedgeE]{ b e c:B}{[ b e c:\x B \wedge \history A]^8} & [c\nextb f]^{11}}}} 
          & 
          \deduce{ b c:\history A}{\Pi_3}
         } 
         & [b\descb c]^{10}
        }
      }
    }
  }
}
\end{displaymath}
and $\Pi_3$ is the following derivation:
\begin{displaymath}
\hspace*{-1.7cm}
\vcenter{
\scriptsize
\infer[\historyI^{12}]{ b c:\history A}{\infer[\splitdesc^{13}]{ b d:A}{[b\descb d]^{12} & \infer[\wedgeE]{ b:A}{[\varphi]^1} & \infer[\last]{ b d:A}{[ d:A]^{13}} & \infer[\last]{ b d:A}{\infer[\lin^{14}]{ b e d:A}{[b\nextb f]^{13} & [b\nextb e]^2 & [f\descb d]^{13} & \infer[\historyE]{ b e d:A}{\infer[\wedgeE]{ bec:\history A}{[ bec:\x B \wedge \history A]^8} & [e \descb d]^{14} & [d \descb c]^{12}}}}}}
}
\end{displaymath}

\begin{sidewaysfigure}[htbp]
Proof of the axiom $\mathit{(A8)}$
\scriptsize
\begin{displaymath}
\vcenter{
\scriptsize
\infer[\limpliesI^1]{ b: B \vee (\f(\x B \wedge \history A)) \limplies \f B}
 	{
		\infer[\veeE^2]{ b:\f B}{[ b:B \vee (\f (\x B \wedge \history A))]^1 
			&  
		\infer[\refldesc^3]{ b:\f B}{\infer[\fI]{ b:\f B}{\infer[\last]{ bb:B}{[ b:B]^2} & [b \descb b]^3}}
			&
		\infer[\fE^4]{ b:\f B}{[ b:\f(\x B \wedge \history A)]^2 & \infer[\ser^5]{ b:\f B}{\infer[\base^6]{ b:\f B}{[c \nextb d]^5 & \infer[\transdesc^7]{ b:\f B}{[b \descb c]^4 & [c \descb d]^6 & \infer[\fI]{ b:\f B}{\infer[\last]{ bd:B}{\infer[\xE]{ bcd:B}{\infer[\wedgeE]{ bc:\x B}{[ bc:\x B \wedge \history A]^4} & [c \nextb d]^5}} & [b\descb d]^7}}}}}
	}
	}
}
\end{displaymath}
   \label{fig:axiom-A8}
\end{sidewaysfigure}
\end{LONG}

\end{document}